\newtheorem{theorem}{Theorem}
\newtheorem{remark}{Remark}
\newtheorem{proposition}{Proposition}
\theoremstyle{definition}
\newtheorem{definition}{Definition}
\newcommand{\norm}[1]{\left\lVert#1\right\rVert}
\begin{document}
\selectlanguage{english}

\onecolumngrid
\newpage
\setcounter{figure}{0}    
\renewcommand{\thefigure}{S\arabic{figure}}
% The following at least makes headings left aligned in the SI
\patchcmd{\section}{\centering}{\raggedright}{}{}
\patchcmd{\subsection}{\centering}{\raggedright}{}{}
\raggedbottom
\setlength{\parskip}{0pt}
\setlength{\textfloatsep}{10pt plus 0pt minus 0pt}

\noindent {\LARGE Supplemental Material}
\vspace*{0.2cm}\\
\noindent {to ``Finding spectral gaps in quasicrystals.''}\\
\noindent {Paul Hege, Massimo Moscolari, Stefan Teufel}

% Revtex tocdepth hack:
\makeatletter
\def\l@subsection#1#2{}
\def\l@subsubsection#1#2{}
\makeatother

\tableofcontents

\section{1. Extended proofs}
\label{sec-periodic}

\noindent
In this section we prove the theorem and the proposition from the main text. For the convenience of the reader, we briefly recall our framework.

The Hamiltonian $H$ is a bounded hermitian operator on  $\ell^2(\Gamma;\mathcal{H})$, where  $\Gamma \subset \mathbb{R}^d$ is a countable set and $\mathcal{H}$ a separable Hilbert space. We assume that there is a maximum distance $r > 0$ such that any point in $\mathbb{R}^d$ is at most at distance~$r$ away from a point in $\Gamma$. Recall that distances in $\mathbb{R}^d$ are measured in the norm  $\|x\|_\infty := \max\{ |x_1|,\ldots, |x_d|\}$. The open cube with side-length $2r$ around $x\in\mathbb{R}^d$ is   
\[
B_r(x) := \{ y\in \mathbb{R}^d\,|\, \|x-y\|_\infty <r\}
\]
and its closure $\overline{B_r}(x) := \{ y\in \mathbb{R}^d\,|\, \|x-y\|_\infty \leq r\}$.
For any set $A\subset \mathbb{R}^d$ and wave function $\psi\in \ell^2(\Gamma;\mathcal{H})$, we denote by  $\norm{\psi}_A^2 := \sum_{x\in \Gamma \cap A} \|\psi(x)\|_{\mathcal{H}}^2$ its $\ell^2$-mass  in the region~$A$.

The following definition makes precise the idea that a Hamiltonian is gapped in the bulk at a certain energy $\lambda$ and a certain length scale $L$,  if all $\epsilon$-quasimodes at that energy that are supported in a region of that scale \eqref{eq-prop} are disproportionately supported near the edge of the region \eqref{sup:eq-prop-conclusio}.

\begin{definition}
	\label{sup:def-eps-bulk-gapped}
	Let $\epsilon,L>0$ and $\lambda\in\mathbb{R}$. We say that a Hamiltonian $H$ is \textbf{locally $\epsilon$-bulk-gapped at energy $\lambda$ and scale $L$}, if there exist constants  $N \in \mathbb{N}$, $N\geq 2$ and $C<\frac{1}{N^d} $, such that for any $x\in\Gamma$ and for any $\psi \in \ell^2(\Gamma;\mathcal{H})$, we have the following implication: Whenever
	\begin{align}
	\norm{ (H - \lambda) \psi}_ {B_L(x)} \leq \epsilon \norm{\psi}_{B_{L}(x)}\,,
	\label{eq-prop}
	\end{align}
then for $l := \frac{L + r}{N} + r$ it holds that
\begin{align}
\label{sup:eq-prop-conclusio}
\norm{\psi}_{\overline{B_l}(x)}^2 \leq C \norm{\psi}_{B_L(x)}^2\,.
\end{align}
\end{definition}

In the following we call any   $\psi \in L^2(\mathbb{R}^d)$ that satisfies the inequality \eqref{eq-prop}   an \textbf{ $L$-local $\epsilon$-quasimode at $x$}. In these terms, Definition~\ref{sup:def-eps-bulk-gapped} says that every $L$-local $\epsilon$-quasimode at $x$ has most of its mass in $B_L(x)$ concentrated outside of $\overline{B_l}(x)$. This terminology is useful in the proof of the main theorem.

\begin{theorem}\label{main-theorem-recall}
If a Hamiltonian $H$ on $\ell^2(\Gamma; \mathcal{H})$ is locally $\epsilon$-bulk-gapped at energy $\lambda$ on some scale $L>0$, then the interval  $(\lambda - \epsilon, \lambda + \epsilon)$ is a gap in the spectrum of $H$, i.e.\
\begin{align*}
\sigma(H) \cap (\lambda - \epsilon, \lambda + \epsilon) = \varnothing.
\end{align*}
\end{theorem}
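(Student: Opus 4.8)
The plan is to argue by contradiction, reducing the (non)existence of spectrum in the open interval to a quantitative mass-balance estimate obtained by tiling $\mathbb{R}^d$ with the inner cubes of Definition~\ref{sup:def-eps-bulk-gapped}. Since $H$ is bounded and hermitian, hence self-adjoint, and $\sigma(H)$ is compact, the assertion $\sigma(H)\cap(\lambda-\epsilon,\lambda+\epsilon)=\varnothing$ is equivalent to $\mathrm{dist}(\lambda,\sigma(H))\geq\epsilon$. So I would assume for contradiction that $d_0:=\mathrm{dist}(\lambda,\sigma(H))<\epsilon$ and pick $\mu\in\sigma(H)$ realizing this distance, $|\mu-\lambda|=d_0$, together with the positive slack $\delta:=\epsilon-d_0>0$. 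By the Weyl criterion there is a sequence $\psi_n$ with $\norm{\psi_n}=1$ and $\norm{(H-\mu)\psi_n}\to0$; the goal is to show that no such approximate eigenfunction at $\mu$ can exist.

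The geometric set-up is a grid $\mathcal{G}:=2l\,\mathbb{Z}^d$ whose inner cubes $\{\overline{B_l}(y)\}_{y\in\mathcal{G}}$ cover $\mathbb{R}^d$ (spacing $2l$ equals the side length), while whose outer cubes $\{B_L(y)\}_{y\in\mathcal{G}}$ overlap with controlled multiplicity. The one computation I would carry out is that the relation $l=\frac{L+r}{N}+r$ forces $L/l<N$, so that along each coordinate an open interval of length $2L$ meets at most $N$ grid points of spacing $2l$; hence each $x\in\Gamma$ lies in at most $N^d$ of the outer cubes, which is exactly matched to the hypothesis $C<\tfrac1{N^d}$. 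I would also record that non-vacuity of the bulk-gap condition forces $l<L$, so that $\overline{B_l}(y)\subseteq B_L(y)$ and $\norm{\psi}_{\overline{B_l}(y)}\le\norm{\psi}_{B_L(y)}$.

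The heart of the argument is a single inequality valid at \emph{every} $y\in\mathcal{G}$, fixing $\psi:=\psi_n$:
\[
\norm{\psi}_{\overline{B_l}(y)}^2\;\le\;C\,\norm{\psi}_{B_L(y)}^2+\delta^{-2}\,\norm{(H-\mu)\psi}_{B_L(y)}^2 .
\]
To see it, I would split on whether $\norm{(H-\mu)\psi}_{B_L(y)}\le\delta\,\norm{\psi}_{B_L(y)}$. If it does, the triangle inequality $\norm{(H-\lambda)\psi}_{B_L(y)}\le\norm{(H-\mu)\psi}_{B_L(y)}+d_0\norm{\psi}_{B_L(y)}\le(\delta+d_0)\norm{\psi}_{B_L(y)}=\epsilon\norm{\psi}_{B_L(y)}$ verifies hypothesis \eqref{eq-prop}, so conclusion \eqref{sup:eq-prop-conclusio} supplies the first term; if it does not, then $\norm{\psi}_{\overline{B_l}(y)}^2\le\norm{\psi}_{B_L(y)}^2<\delta^{-2}\norm{(H-\mu)\psi}_{B_L(y)}^2$ supplies the second. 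Summing over $y\in\mathcal{G}$ and using that the inner cubes cover (left side $\ge\norm{\psi}^2$) while the outer cubes have multiplicity $\le N^d$ (right side $\le CN^d\norm{\psi}^2+\delta^{-2}N^d\norm{(H-\mu)\psi}^2$) yields, with $\norm{\psi_n}=1$,
\[
1\;\le\;C N^d+\delta^{-2}N^d\,\norm{(H-\mu)\psi_n}^2 .
\]
Letting $n\to\infty$ sends the last term to $0$ and leaves $1\le CN^d<1$, the desired contradiction.

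The step I expect to be the genuine obstacle is precisely why the defect term does not spoil the estimate. A naive application of the definition at energy $\lambda$ would bound the defect by $\norm{(H-\lambda)\psi}^2$, which along a Weyl sequence converges to $d_0^2$ rather than $0$, and the unavoidable overlap factor $N^d$ would then only give $1\le CN^d+N^d(d_0/\epsilon)^2$, which is consistent and proves nothing. The resolution I would emphasize is to center the quasimode at the nearest spectral value $\mu$, trading the fixed distance $d_0$ for the positive slack $\delta=\epsilon-d_0$ that certifies \eqref{eq-prop} locally, while the genuinely controllable defect $\norm{(H-\mu)\psi_n}$ is what vanishes. Everything else — the covering property, the multiplicity count, and $l<L$ — is elementary once the constants in $l=\frac{L+r}{N}+r$ and $C<\frac1{N^d}$ are lined up as above.
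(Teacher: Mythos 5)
Your overall strategy --- the dichotomy at each box between ``small local defect, apply the bulk-gap bound'' and ``large local defect, absorb into $\norm{(H-\mu)\psi_n}^2$'', followed by a covering/multiplicity count against the normalisation --- is the same as the paper's, and centring the Weyl sequence at the nearest spectral point $\mu$ with fixed slack $\delta=\epsilon-d_0$ is a clean variant of the paper's threshold $\sqrt{\delta}$ with $\delta<(\epsilon-|\lambda-\nu|)^2$. However, there is one genuine gap: Definition~\ref{sup:def-eps-bulk-gapped} supplies the implication \eqref{eq-prop} $\Rightarrow$ \eqref{sup:eq-prop-conclusio} only for centres $x\in\Gamma$, whereas your grid $\mathcal{G}=2l\,\mathbb{Z}^d$ consists of points of $\mathbb{R}^d$ that in general do not lie in $\Gamma$. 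In your ``Case 1'' you invoke the bulk-gap hypothesis at such a $y$, which the hypothesis does not license. This restriction is not cosmetic; the entire method rests on verifying the criterion only on the finitely many local patches centred at points of the quasicrystal.

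Nor can the gap be closed by simply snapping each $y\in\mathcal{G}$ to its nearest point $x_y\in\Gamma$ (which lies within distance $r$): after moving the centres by up to $r$, the inner cubes $\overline{B_l}(x_y)$ of a grid of spacing $2l$ no longer cover $\mathbb{R}^d$ --- gaps of width up to $2r$ open between adjacent cubes --- so the lower bound $\sum_y\norm{\psi}^2_{\overline{B_l}(x_y)}\geq\norm{\psi}^2$ fails. The paper's construction is engineered around exactly this point: it uses the finer grid $\bigcup_q\widetilde Z_q$ of spacing $2(L+r)/N$, organised into $N^d$ sublattices of spacing $2(L+r)$, and deforms each sublattice onto $\Gamma$ by a nearest-point map $f_q$. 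The constant $l=\frac{L+r}{N}+r$ carries the extra $+r$ precisely so that the covering survives the deformation, and the sublattice spacing $2(L+r)$ carries the extra $+2r$ so that the large boxes $B_L$ within each sublattice remain pairwise disjoint, giving overall multiplicity exactly $N^d$. Once you insert that deformation step and shrink your grid spacing from $2l$ to $2(L+r)/N$, your argument goes through and becomes essentially the paper's proof.
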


\begin{proof}
Assume that $H$ is $\epsilon$-bulk-gapped at energy $\lambda$ and scale $L>0$. Let $  l, N$ and $C$ be as in Definition~\ref{sup:def-eps-bulk-gapped}. 
We generalise the proof strategy presented in the main text.
For every $q \in \{1, \dots, N\}^d$ let
\begin{align}
\widetilde Z_q := 2(L + r)(\mathbb{Z}^{d} + q/N)\label{eq-def-tilde-Zq}
\end{align}
and   
\begin{align}
f_q: \widetilde  Z_q \to \Gamma\,,\; \widetilde x\mapsto f_q(\widetilde x) := \text{the point in $\Gamma$ closest to $\tilde x$}\,.
\end{align}
In the case where several   points  in $\Gamma$ minimize the distance to $\widetilde x$, we   arbitrarily  pick one of them in the definition of $f_q(\widetilde x)$. 
By the assumption on $\Gamma$ we have that $\|\widetilde x- f_q(\widetilde x)\|_\infty \leq  r$ for all $\widetilde x\in\widetilde Z_q$. Since two distinct points in $\widetilde Z_q$ have at least distance $2(L + r)$, this implies that the map $f_q$ is one-to-one. Hence, as a map onto its image
\begin{align}
Z_q :=  f_q(\widetilde Z_q) \subset \Gamma\,,
\label{eq-def-Zq}
\end{align}
the map $f_q: \widetilde  Z_q \to Z_q$ is a bijection and we think of $Z_q$ as a deformation of~$\widetilde Z_q$.

Is is now straightforward to see that the large boxes $B_L(x)$ are still mutually disjoint when $x$ varies in one of the deformed sublattices $Z_q$ and that the small boxes  $\overline{B_l}(x)$ still cover all of $\mathbb{R}^d$  if $x$ varies in the union $\cup_q Z_q$:\\[2mm]
\noindent\textit{Disjointness:} For any $q \in \{1, \dots, N\}^d$ and two different points $x, y \in Z_q$ 
\begin{align}
B_L(x) \cap B_L(y) = \varnothing\,.
\label{eq-intersections-disjoint}
\end{align}
\begin{proof}Let $\widetilde x := f_q^{-1}(x)$ and $\widetilde y:= f_q^{-1}(y)$. Then $\widetilde x$ and $\widetilde y$ are two distinct points in the square lattice $\widetilde Z_q$ and thus their distance is at least $2(L + r)$.  Using the inverse triangle inequality we conclude that
\[
\|x-y\|_\infty = \|\tilde x -\tilde y + x-\tilde x +  \tilde y - y\|_\infty \geq  \|\tilde x -\tilde y\|_\infty - \|x-\tilde x\|_\infty -\|  \tilde y - y\|_\infty  \geq  2(L+r) - 2r=2L\,.
\]
Thus, the boxes $B_L(x)$ and  $B_L(y)$ do not overlap.
\end{proof}

\noindent \textit{Covering:}
\begin{align}
\bigcup_{q \in \{1, \dots, N\}^d} \bigcup_{x \in Z_q} \overline{B_l}(x) = \mathbb{R}^d.
\label{eq-double-union}
\end{align}
\begin{proof}
Note that the union
\begin{align*}
\widetilde Z  := \bigcup_{q \in \{1, \dots, N\}^d} \widetilde Z_q
\end{align*}
is a square lattice with side length $2(L + r) / N$. Thus, for every $p \in \mathbb{R}^d$, there exists a $q  \in \{1, \dots, N\}^d$ and  $\widetilde x \in \widetilde Z_q$ such that $\|p- \widetilde x\|_\infty \leq (L + r) / N$.
The triangle inequality now implies 
\begin{align}
\| p- f_q(\widetilde x) \|_\infty \leq \|p-  \widetilde x  \|_\infty + \|\widetilde x- f_q(\widetilde x) \|_\infty \leq (L + r) / N + r=l\, 
\end{align}
and therefore  $p \in \overline{B_l}(x)$ for $x := f_q(\widetilde x) \in Z_q$.
\end{proof}

 These two properties allow us to prove by contradiction that the spectrum $\sigma(H)$ does not contain the interval $(\lambda - \epsilon, \lambda + \epsilon)$.  
  
  Suppose there exists   $\nu \in (\lambda - \epsilon, \lambda + \epsilon)\cap\sigma(H) $. Then, according to Weyl's criterion,  there exist arbitrarily precise quasimodes for the energy $\nu$. More precisely, for any $\delta>0$  there exists $\psi \in \ell^2(\Gamma, \mathcal{H})$   such that  
\begin{align} \label{eq-quasimode}
\norm{\psi} = 1 \quad\mbox{and}\quad 
\| (H - \nu) \psi \|  < \delta .
\end{align}
We now fix $\delta > 0$ such that 
\begin{subequations}
\begin{gather}
\delta < (\epsilon - |\lambda - \nu|)^2 \label{eq-delta-bound-a}\\
\delta < N^{-d} - C \label{eq-delta-bound-b}
\end{gather}
\end{subequations}
and  choose a corresponding $\psi \in \ell^2(\Gamma, \mathcal{H})$ that satisfies \eqref{eq-quasimode}.

Notice that if $\psi$ is a $\delta$-quasimode on $\mathbb{R}^d$, this does not imply that $\psi$ is an $L$-local $\delta$-quasimode for all $x \in \mathbb{R}^d$. However, we have that the total mass of $\psi$ on those squares $B_L(x)$ for which $\psi$ is not an $L$-local $\delta$-quasimode is small (it is of order $\delta$). To see this, we split each $Z_q$ into two subsets 
\begin{align*}
Z_q^+ := \big\{  x \in  Z_q  \, |\, \norm{(H - \nu) \psi}_{B_L(x)} \leq \sqrt{\delta} \norm{\psi}_{B_L(x)} \big\}\,,\\
Z_q^- := \big\{  x \in Z_q  \, |\, \norm{(H - \nu) \psi}_{B_L(x)} > \sqrt{\delta} \norm{\psi}_{B_L(x)}\big\}\,.
\end{align*}
Around the points $x \in Z_q^+$, $\psi$ is an $L$-local $\epsilon$-quasimode, since
%\begin{equation}
\begin{align} \label{eq:LEpsQuasimode}
 \norm{(H - \lambda)\psi }_{B_L(x)}
 &\leq \|(H - \nu) \psi \|_{B_L(x)} \hspace{-3pt}+ |\lambda - \nu| \norm{\psi}_{B_L(x)} \nonumber
 \\ &\leq (\sqrt{\delta} + |\lambda - \nu|) \norm{\psi}_{B_L(x)}\nonumber
 \\ & < \epsilon \norm{\psi }_{B_L(x)}\,.
\end{align}
For the last step of the above, we rewrite \eqref{eq-delta-bound-a} as
\begin{align*}
\delta < (\epsilon - | \lambda - \nu|)^2 \quad \Rightarrow \quad \sqrt{\delta} < \epsilon - |\lambda - \nu |
\quad \Rightarrow \quad 
\sqrt{\delta} + | \lambda - \nu | < \epsilon\,.
\end{align*}
Since  $H$ is $\epsilon$-bulk-gapped on the scale $L$, \eqref{eq:LEpsQuasimode} implies that for every $x \in Z^+_q$ the inequality \eqref{sup:eq-prop-conclusio} holds true. Summing over all $x \in Z_q^+$, we get
\begin{align}
\sum_{x \in Z_q^+} \norm{\psi}_{\overline{B_l}(x)}^2 \leq C \sum_{x \in Z_q^+} \norm{\psi}_{B_L(x)}^2\,.
\label{sum-Zplus}
\end{align}
By the disjointness condition shown before, we have that $B_L(x)$ and $B_L(y)$ are disjoint for different $x$ and $y$ in $Z^+_q$.  
Since for disjoint sets $A$ and $B$, we have
$
\norm{\psi}_A^2 + \norm{\psi}_B^2 = \norm{\psi}_{A \cup B}^2
$, we obtain
\begin{equation}
\label{eq:aux1}
\sum_{x \in Z_q^+} \norm{\psi}_{B_L(x)}^2 = \norm{\psi}_U^2 \leq \norm{\psi}^2 = 1\quad\text{for } U := \bigcup_{x \in Z_q^+} B_L(x) .
\end{equation}
Combining \eqref{eq:aux1} with \eqref{sum-Zplus}, we get
\begin{align}
\sum_{x \in Z_q^+} \norm{\psi}_{\overline{B_l}(x)}^2 \leq C.
\label{eq-small-Zplus-sum}
\end{align}
Let now $x \in Z_q^-$, then  by definition of  $Z_q^-$ we have \begin{equation} 
\label{eq:aux2}
\norm{\psi}^2_{B_L(x)} < \frac{1}{\delta}\norm{(H - \nu) \psi}^2_{B_L(x)}  \, .
\end{equation}
By taking the sum over all $x \in Z_q^-$ we get
\begin{align*}
\sum_{x \in Z_q^-} \norm{\psi}^2_{\overline{B_l}(x)} \leq \sum_{x \in Z_q^-}  \norm{\psi}^2_{B_L(x)} < \frac1\delta \sum_{x \in Z_q^-} \norm{(H - \nu) \psi}_{B_L(x)}^2 \leq \frac1\delta \norm{(H - \nu) \psi}^2 \,,
\end{align*}
  where we used the disjointness of the $B_L(z)$ for $z \in Z_q$ in the last step. Since $\psi$ is a $\delta$-quasimode, meaning $\norm{(H - \nu) \psi}^2 < \delta^2$, we get
\begin{align}
\sum_{x \in Z_q^-} \norm{\psi}^2_{\overline{B_l}(x)} < \delta \, .
\label{eq-small-Zminus-sum}
\end{align}
 Combining  \eqref{eq-small-Zplus-sum} and \eqref{eq-small-Zminus-sum} and using the covering property \eqref{eq-double-union}, we finally obtain
\begin{align*}
\norm{\psi}^2 &\leq \sum_{q \in \{ 1, \dots , N\}^d} \sum_{x \in Z_q} \norm{\psi}^2_{\overline{B_l}(x)} = \sum_{q \in \{ 1, \dots , N\}^d} \left( \sum_{x \in Z_q^+} \norm{\psi}^2_{\overline{B_l}(x)} + \sum_{x \in Z_q^-} \norm{\psi}^2_{\overline{B_l}(x)}  \right)\\
&\leq  \sum_{q \in \{ 1, \dots , N\}^d} (C + \delta) \;= \;N^d (C + \delta)  \; <\;1\,,
\end{align*}
where in the last inequality we used the hypothesis \eqref{eq-delta-bound-b} on $\delta$. Since $\norm{\psi}^2 < 1$ contradicts the  normalisation of $\psi$ assumed in \eqref{eq-quasimode}, such  a $\delta$-quasimode   cannot exist,  and $\nu$ is not in the spectrum of $H$.
\end{proof}

 \begin{remark} Notice that neither Definition \ref{sup:def-eps-bulk-gapped} nor Proposition \ref{sup:eq-prop-conclusio} require the boundedness of the Hamiltonian and can be easily generalized to unbounded operators taking into account only vectors that belong to the domain of the Hamiltonian.
\end{remark}

Next we prove the  numerically verifiable criterion for the $\epsilon$-bulk-gapped property. Recall that for any set $A\subset \mathbb{R}^d$ we denote by $\mathbf{1}_A$ the characteristic function of $A$ and we use the shorthand notation
$H_A :=\mathbf{1}_A H \mathbf{1}_A$.

\begin{proposition}
\label{proposition-recall}
In addition to the assumption of Theorem~\ref{main-theorem-recall},
 let $H$ have finite range, i.e.\ there exists a maximal hopping distance $m$ such that
\begin{align*}
H_{xy} = 0\quad\text{for all\; $x,y\in\Gamma$ \;with\; $\norm{x-y}_\infty > m$}.
\end{align*} 
Let $L > 0$, $N \in \mathbb{N}$, $N\geq2$, $l := \frac{L + r}{N} + r$, $\lambda\in\mathbb{R}$. 
Assume that for every $x \in \Gamma$   there exists   a set  $A(x)\subset \Gamma$ such that 
$\overline{B_l}(x)\,\subseteq\, A(x)\subseteq B_{L-m}(x)$,  $\lambda \notin \sigma(H_{A(x)})$, and
\begin{align*}
%\label{eq-prop-2}
D(x) =\norm{\mathbf{1}_{\overline{B_l}(x)} (H_{A(x)} - \lambda)^{-1}  \mathbf{1}_{A(x)} H \mathbf{1}_{B_L(x) \backslash A(x)}}_\mathrm{op}
\end{align*}
satisfies $D(x)<N^{-d/2}$.
Then $H$ is $\epsilon$-bulk-gapped at energy $\lambda$ and scale $L$ for any~$\epsilon>0$ with 
\begin{align}
\epsilon < \inf_{x\in\Gamma} \frac{N^{-d/2} - D(x)}{\norm{\mathbf{1}_{\overline{B_l}(x)} (H_{A(x)}- \lambda)^{-1} \mathbf{1}_{A(x)}}_\mathrm{op}}\,.
\label{eq-conditions-epsilon-recall}
\end{align}
\end{proposition}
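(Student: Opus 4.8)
The plan is to verify the defining implication of Definition~\ref{sup:def-eps-bulk-gapped} separately for each $x\in\Gamma$, with the given $N$ and with a single constant $C<N^{-d}$ extracted at the very end. Abbreviating $A:=A(x)$ and $B(x):=\norm{\mathbf{1}_{\overline{B_l}(x)}(H_{A}-\lambda)^{-1}\mathbf{1}_{A}}_\mathrm{op}$ (the denominator in \eqref{eq-conditions-epsilon-recall}), I would show that every $\psi$ obeying the quasimode bound \eqref{eq-prop} satisfies $\norm{\psi}_{\overline{B_l}(x)}\le\big(\epsilon\,B(x)+D(x)\big)\norm{\psi}_{B_L(x)}$; squaring this and invoking \eqref{eq-conditions-epsilon-recall} then gives \eqref{sup:eq-prop-conclusio}. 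The mechanism is that the hypothesis $\lambda\notin\sigma(H_{A})$ makes the local resolvent $(H_{A}-\lambda)^{-1}$ available, so $H-\lambda$ can be inverted inside the bulk region $A$, while the finite-range assumption confines the resulting error to a thin boundary layer.

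The algebraic core is the resolvent (Schur-complement) identity
\[
(H_{A}-\lambda)\,\mathbf{1}_{A}\psi \;=\; \mathbf{1}_{A}(H-\lambda)\psi \;-\; \mathbf{1}_{A}H\,\mathbf{1}_{\Gamma\backslash A}\psi,
\]
obtained by expanding $\mathbf{1}_{A}(H-\lambda)\psi=\mathbf{1}_{A}H\mathbf{1}_{A}\psi+\mathbf{1}_{A}H\mathbf{1}_{\Gamma\backslash A}\psi-\lambda\mathbf{1}_{A}\psi$ and collecting the first and last terms into $(H_{A}-\lambda)\mathbf{1}_{A}\psi$. The geometric input localises the boundary coupling: if $y\notin A$ is joined to some $z\in A$ by a nonzero hopping, finite range forces $\norm{y-z}_\infty\le m$, and since $A\subseteq B_{L-m}(x)$ this yields $\norm{y-x}_\infty<L$, i.e.\ $y\in B_L(x)\backslash A$. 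Hence $\mathbf{1}_{A}H\,\mathbf{1}_{\Gamma\backslash A}\psi=\mathbf{1}_{A}H\,\mathbf{1}_{B_L(x)\backslash A}\psi$, precisely the factor appearing in $D(x)$. Applying $\mathbf{1}_{\overline{B_l}(x)}(H_{A}-\lambda)^{-1}$ to the identity and using $\overline{B_l}(x)\subseteq A$ (so the left-hand side collapses to $\mathbf{1}_{\overline{B_l}(x)}\psi$) then represents $\mathbf{1}_{\overline{B_l}(x)}\psi$ as the difference of two explicit terms.

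Taking norms in that representation, and noting that each operator carries a projection on its right ($\mathbf{1}_{A}$ in the first term, $\mathbf{1}_{B_L(x)\backslash A}$ in the second), only the corresponding restrictions of the input vectors contribute. The first term is thus bounded by $B(x)\norm{(H-\lambda)\psi}_{A}\le B(x)\norm{(H-\lambda)\psi}_{B_L(x)}$ since $A\subseteq B_L(x)$, and the second by $D(x)\norm{\psi}_{B_L(x)\backslash A}\le D(x)\norm{\psi}_{B_L(x)}$ straight from the definition of $D(x)$. Feeding in the quasimode bound $\norm{(H-\lambda)\psi}_{B_L(x)}\le\epsilon\norm{\psi}_{B_L(x)}$ yields the claimed pointwise estimate, whose square reads $\norm{\psi}_{\overline{B_l}(x)}^2\le(\epsilon B(x)+D(x))^2\norm{\psi}_{B_L(x)}^2$, and \eqref{eq-conditions-epsilon-recall} guarantees $\epsilon B(x)+D(x)<N^{-d/2}$, hence a squared prefactor strictly below $N^{-d}$ for every $x$. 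The only genuine obstacle is converting these $x$-dependent prefactors into a single constant $C<N^{-d}$ as Definition~\ref{sup:def-eps-bulk-gapped} demands: I would set $C:=\sup_{x\in\Gamma}(\epsilon B(x)+D(x))^2$ and use that \eqref{eq-conditions-epsilon-recall} is a strict bound against the infimum over $x$, so this supremum stays strictly below $N^{-d}$ — which holds in particular whenever the local data $B(x),D(x)$ realise only finitely many values up to the symmetries of $\Gamma$, as in the quasicrystalline setting under consideration.
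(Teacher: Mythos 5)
Your proposal follows the paper's own proof essentially verbatim: the same identity $(H_{A(x)}-\lambda)\mathbf{1}_{A(x)}\psi=\mathbf{1}_{A(x)}(H-\lambda)\psi-\mathbf{1}_{A(x)}H\mathbf{1}_{A(x)^c}\psi$, the same use of finite range together with $A(x)\subseteq B_{L-m}(x)$ to replace $\mathbf{1}_{A(x)^c}$ by $\mathbf{1}_{B_L(x)\setminus A(x)}$, and the same triangle-inequality estimate $\norm{\psi}_{\overline{B_l}(x)}\leq(\epsilon B(x)+D(x))\norm{\psi}_{B_L(x)}$ followed by choosing $C$ between $\sup_{x\in\Gamma}(\epsilon B(x)+D(x))^2$ and $N^{-d}$. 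Your closing caveat about why that supremum is strictly below $N^{-d}$ is, if anything, slightly more careful than the paper, which asserts the strict inequality without comment.
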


\begin{proof}
 Let $x \in \Gamma$ and suppose that for some $\psi  \in\ell^2(\Gamma;\mathcal{H}) $ property \eqref{eq-prop} holds, i.e.\ 
 \[
 u := \mathbf{1}_{B_L(x)} (H - \lambda) \psi
 \]
 satisfies 
 \begin{equation}
 \norm{u} < \epsilon \norm{\psi}_{B_L(x)}\,,\label{eq-u-bound}
\end{equation}
with $\epsilon>0 $ satisfying \eqref{eq-conditions-epsilon-recall}. We need to show that this implies \eqref{sup:eq-prop-conclusio}.
Writing
\begin{align*}
(H_{A(x)} - \lambda) \mathbf{1}_{A(x)} \psi &= \mathbf{1}_{A(x)} (H - \lambda) \mathbf{1}_{A(x)} \psi\\
&= \mathbf{1}_{A(x)} (H - \lambda) \psi - \mathbf{1}_{A(x)} (H - \lambda) \mathbf{1}_{{A(x)}^c} \psi\\
%&= \chi_A (H - \lambda) \psi - \chi_A H \chi_{A^C} \psi\\
&= \mathbf{1}_{A(x)} u -\mathbf{1}_{A(x)} H \mathbf{1}_{{A(x)}^c} \psi 
\end{align*}
and multiplying this equality by  $(H_{A(x)} - \lambda)^{-1}$  gives
\begin{align*}
\mathbf{1}_{A(x)} \psi = (H_{A(x)} - \lambda)^{-1} (\mathbf{1}_{A(x)} u + \mathbf{1}_{A(x)} H \mathbf{1}_{{A(x)}^c} \psi)\,.
\end{align*}
Since we need to estimate $\norm{\psi}_{\overline{B_l}(x)}$, we can multiply by $\mathbf{1}_{B_l(x)}$ to obtain
\begin{align*}
\mathbf{1}_{B_l(x)}\psi &= \mathbf{1}_{\overline{B_l}(x)} (H_{A(x)} - \lambda)^{-1} (\mathbf{1}_{A(x)} u + \mathbf{1}_{A(x)} H \mathbf{1}_{{A(x)}^c} \psi)\,.
\end{align*}
Using the triangle inequality, we obtain
\begin{align}
\norm{\psi}_{\overline{B_l}(x)} &\leq \norm{\mathbf{1}_{\overline{B_l}(x)} (H_{A(x)} - \lambda)^{-1} \mathbf{1}_{A(x)} u} + \norm{\mathbf{1}_{\overline{B_l}(x)} (H_{A(x)} - \lambda)^{-1} \mathbf{1}_{A(x)} H \mathbf{1}_{{A(x)}^c} \psi} \,.
\label{eq-after-triangle}
\end{align}
%The first term represents the error due to the fact that $\psi$ is not an exact eigenstate. We can estimate it by 
The first term is easily bounded using  \eqref{eq-u-bound},
\begin{align*}
\norm{\mathbf{1}_{\overline{B_l}(x)} (H_{A(x)} - \lambda)^{-1} \mathbf{1}_{A(x)} u} < \norm{\mathbf{1}_{\overline{B_l}(x)} (H_{A(x)}- \lambda)^{-1} \mathbf{1}_{A(x)}}_\mathrm{op} \epsilon \norm{\psi}_{B_L(x)} =: M(x) \;\epsilon \norm{\psi}_{B_L(x)} \,.
\end{align*}
  Because of the finite range hypothesis on $H$ and the fact that ${A(x)} \subseteq B_{L-m}(x)$, we can rewrite the second term   as
$$
\mathbf{1}_{\overline{B_l}(x)} (H_{A(x)} - \lambda)^{-1} \mathbf{1}_{A(x)} H \mathbf{1}_{{A(x)}^c} \psi= \mathbf{1}_{\overline{B_l}(x)} (H_{A(x)} - \lambda)^{-1} \mathbf{1}_{A(x)} H \mathbf{1}_{B_L(x) \setminus {A(x)}} \psi.
$$
Thus, the second term in \eqref{eq-after-triangle} can be estimated using the assumption on $D(x)$ and we find that
\begin{align*}
\norm{\psi}_{\overline{B_l}(x)} < \epsilon M(x) \, \norm{\psi}_{B_L(x)} + D(x) \norm{\psi}_{B_L(x) \setminus {A(x)}} \leq (D(x) + \epsilon M(x)) \norm{\psi}_{B_L(x)}\,.
\end{align*}
Since   \eqref{eq-conditions-epsilon-recall} implies that  
\[
\sup_{x\in\Gamma} (D(x) + \epsilon M(x))^2 < \sup_{x\in\Gamma} \left(D(x) + \frac{N^{-d/2} - D(x)}{M(x)} M(x)\right)^2   = N^{-d}\,,
\]
  \eqref{sup:eq-prop-conclusio} holds for any 
 $C$ with $\sup_{x\in\Gamma} (D(x) + \epsilon M(x))^2 < C < N^{-d}$.
\end{proof}

\section{2. Description of the algorithms \label{sec-algorithms}}
\label{sm-sec:DescriptionAlg}
\noindent
In this section we explain in detail how to apply the general results of Theorem~\ref{main-theorem-recall}   to Hamiltonians modelled over the  
Ammann-Beenker tiling by enumerating all local patches on a certain scale $L$ and then checking the conditions of Proposition~\ref{proposition-recall}. The method can be easily generalised to other cut-and-project tilings. Moreover, Proposition~\ref{proposition-recall} can be applied in principle to any Hamiltonian that has finite local   complexity. The problem of enumerating all possible local restrictions of $H_{B_L(x)}$ on a scale $L$ then needs to be solved in a way specific to the structure of $\Gamma$ and $H$.

We first describe an algorithm to determine all the different local patches that can occur the Ammann-Beenker quasicrystal $\Gamma_\mathrm{AB}$ defined with the cut-and-project method, that is to determine  the set   $\mathcal{C}_L:= 
\{\mathcal{C}_L(x)\,|\, x\in\Gamma_\mathrm{AB}\}$ of all local patches 
$
\mathcal{C}_L(x) := \{ y\in \mathbb{R}^d \,|\, x+ y \in \Gamma_\mathrm{AB}\cap B_L(x)\} 
$.

Recall that the vertices of the Ammann-Beenker tiling are defined as the set
\begin{align*}
\Gamma_\mathrm{AB} = \big\{\,p(z) \;\big|\; z \in \mathbb{Z}^4,\,\kappa(z) \in R\,\big\}\,,
\end{align*}
where  $a=\frac{1}{\sqrt{2}}$, $R\subset \mathbb{R}^2$ is the regular axis-aligned octagon centered at $0$ with side length $1$, and
\begin{align*}
p = \begin{pmatrix}
1 & a & 0 & - a\\
0 & a & 1 & a
\end{pmatrix} \,, \quad
\kappa = \begin{pmatrix}
1 & -a & 0 & a\\
0 & a & -1 & a
\end{pmatrix}
\end{align*}
are the two ``projections'' as maps from $\mathbb{R}^4$ to $\mathbb{R}^2$.

The first algorithm we describe determines the set $V_L \subseteq \mathbb{Z}^4$ of ``candidate points'' defined as
\begin{align*}
V_L = \{\, v \in \mathbb{Z}^4 \;|\; p(v) \in  B_L(0) \;\text{and}\; \kappa(v) \in 2 R \,\}.
\end{align*}
As explained in the main text, this set is defined such that for every $z \in \mathbb{Z}^4$, the points of $\Gamma_\mathrm{AB} \cap B_L(p(z))$ are all of the form $p(z + v)$ for some $v \in V_L$. Thus to determine what points are part of the local patch around $p(z)$, we only need to check the points $z + v$ for $v \in V_L$.

The set  of candidate points $V_L$ only depends on $L$ and hence we only have to be compute it once at the beginning of the algorithm. To compute it, it would be possible in principle to simply check the two conditions $p(v) \in B_L(0)$ and $\kappa(v) \in 2R$ for all integer points in a four-dimensional cube around $0$ with radius $2L$, say, but this would be very inefficient as it requires checking $O(L^4)$ points. Because the condition $\kappa(v) \in R$ means that all points in $V_L$ lie close to the two-dimensional hyperplane defined by $\kappa(v) = 0$, it should only be necessary to check the conditions for $O(L^2)$~points.

To compute $V_L$ efficiently, consider the matrix
\begin{align*}
t = \begin{pmatrix}
1 & 0 \\
a & a \\
0 & 1 \\
-a & a
\end{pmatrix}\,.
\end{align*}
The matrix $t$ satisfies $\kappa t = 0$ and $p t = 2 \cdot \mathbf{1}$. Because the columns of $t$ together with the second and fourth canonical basis vectors $e_2$ and $e_4$ form a linear basis of $\mathbb{R}^4$, we can write any $v \in V_L$ as
\begin{align}
v = t\begin{pmatrix}w_1 \\ w_2\end{pmatrix} + q_1 e_2 + q_2 e_4 = \begin{pmatrix} w_1 \\ a(w_1+w_2) +q_1\\ w_2 \\ a(w_2-w_1) +q_2
\end{pmatrix}  \,. \label{eq-v-def}
\end{align}
Because $v \in \mathbb{Z}^4$, also   $w_1,w_2\in\mathbb{Z}$.
The condition $\kappa v \in 2 R$ that holds for all $v \in V_L$ can be used to bound $q_1$ and $q_2$. In fact, using $\kappa t = 0$  we conclude that 
\begin{align}
\begin{pmatrix}a & a \\ -a & a\end{pmatrix} \begin{pmatrix}q_1 \\ q_2\end{pmatrix} = \kappa(q_1 e_2 + q_2 e_4) = \kappa(v) \in 2R\,.
\label{cond-q1-q2}
\end{align}
%i.e.\
%\begin{align}
%\kappa(q_1 e_2 + q_2 e_4) = \begin{pmatrix}a & a \\ -a & a\end{pmatrix} \begin{pmatrix}q_1 \\ q_2\end{pmatrix} \in 2R\,.
%\label{cond-q1-q2}
%\end{align}
Since $\begin{pmatrix}a & a \\ -a & a\end{pmatrix}$ and hence also its inverse are symmetries of $R$,  \eqref{cond-q1-q2} just becomes 
\begin{align*}
\begin{pmatrix}q_1 \\ q_2\end{pmatrix} \in 2R\,.
\end{align*}
In particular, this condition on $(q_1,q_2)$ is independent of $L$. This is what allows enumerating all points according to equation~\eqref{eq-v-def} with $q_1, q_2$ in this range to be $O(L^2)$.

We now compute the range of values that $w_1$ and $w_2$ can take such that  \eqref{eq-v-def}  defines  an element $v \in V_L$. Using $pt = 2\cdot\mathbf{1}$ we find
\begin{align*}
p(v) = p \left( t \begin{pmatrix}w_1\\w_2\end{pmatrix} + q_1 e_2 + q_2 e_4\right) = 2 \begin{pmatrix}w_1 \\ w_2\end{pmatrix} + \begin{pmatrix}a & -a \\ a & a\end{pmatrix} \begin{pmatrix}q_1 \\ q_2\end{pmatrix}\,.
\end{align*}
As also $\begin{pmatrix}a & -a \\ a & a\end{pmatrix}$  leaves $R$ invariant, $(q_1, q_2) \in 2R$ implies
\begin{align*}
p(v) - 2 \begin{pmatrix}w_1 \\ w_2\end{pmatrix} \in 2 R\,.
\end{align*}
Since   $v\in V_L$ requires  $pv \in B_L(0)$ and since the $\infty$-norm of $2R$ is bounded by $1 + \sqrt{2}$, we find that
\begin{align*}
2 \begin{pmatrix}w_1 \\ w_2\end{pmatrix} \in B_{L + 1 + \sqrt{2}}(0)\,.
\end{align*}
Thus $w_1$ and $w_2$ must both lie in the interval $(-L/2 - s, L/2 + s)$ with $s=\frac{1+\sqrt{2}}{2}$.

Thus, when determining whether $v\in\mathbb{Z}^4$ lies in $V_L$,  the values of $ v_1=w_1$ and $ v_3=w_2$ can be restricted to  integers in this range (line 1 in Algorithm 1 below). For $q_1$ and $q_2$, it suffices to consider all values in $2R\subset B_{2s}(0)$ such that the resulting values for $v_2$ and $v_4$ become integral (line 2 in Algorithm 1 below).

\begin{algorithm}[H]
\caption{Enumerate ``candidate set'' $V_L$
\label{algorithm-enumerate-candidates}}
\begin{algorithmic}[1]
\ForAll{integers $v_1, v_3$   in $(-L/2 - s,  L/2 + s)$}
\ForAll{integers $v_2 \in a(v_1 + v_3) + [-2s, 2s]$ and $v_4 \in a(v_1 - v_3) + [-2s, 2s]$}
\If{$p(v) \in B_L(0)$ and $\kappa(v) \in 2 R$}
\State add $v$ to $V_L$
\EndIf
\EndFor
\EndFor
\end{algorithmic}
\end{algorithm}
Next, we describe the algorithm to enumerate the local patches of the Ammann-Beenker tiling. As we have discussed in the main text, the local patch $\mathcal{C}_L(x)$ centered at $x=p(z)\in \Gamma_\mathrm{AB}$ can be characterized by which of the candidate points in $V_L$  ``become part of the tiling''. Namely, for every $v \in V_L$, we have
\begin{align}
p(z + v) \in \Gamma_\mathrm{AB} \quad\Leftrightarrow\quad \kappa(z + v) \in R \quad\Leftrightarrow\quad \kappa(z) \in R - \kappa(v)\,.
\label{eq-equivalences}
\end{align}
Defining for $v\in\mathbb{R}^4$ the following decomposition of the acceptance region,
\begin{gather*}
 P^1(v) := R \cap  (R - \kappa(v))\quad\text{and}\quad P^0(v) := R \setminus P^1(v)\,,
\end{gather*}
   \eqref{eq-equivalences} entails that  for every $v \in V_L$ it holds that 
   \begin{equation}\label{condition1}
   p(v) \in \mathcal{C}_L(x) \quad\Leftrightarrow\quad
   \kappa(z)\in P^1(v)\,.
   \end{equation}
By labelling the points in the set $V_L$ with an index $i \in \{1,\dots, |V_L|\}$, we can uniquely associate to each bit string $b=(b_1,\dots,b_{|V_L|}) \in \{0,1\}^{|V_L|}$ a local patch
\begin{align*}
\mathcal{C}_{L,b} := \{ \,p(v_i) \, | \,    b_i = 1, i \in \{1,\dots, |V_L|\}\}\,.
\end{align*}
From \eqref{condition1} we conclude that 
\begin{equation}
\mathcal{C}_L(x) = \mathcal{C}_{L,b} \quad\Leftrightarrow
\quad \kappa(z) \in P_{(b_i)} := \bigcap_{i = 1}^{|V_L|} P^{b_i}(v_i)\,.\label{eq-def-A}
\end{equation}

For many bit strings $b$ the set $P_{(b_i)}$ turns out to be empty and thus not all local patches $\mathcal{C}_{L,b}$ defined by bit strings $b$ of length $|V_L|$ correspond to actual local patches of the Ammann-Beenker tiling. In fact, the number of local patches (also referred to as the ``patch counting function'' or, in \cite{julien2010}, ``complexity'') of the Ammann-Beenker tiling is of order $O(L^2)$, cf.\  \cite{julien2010}, while the number of possible bit strings is $O(2^{|V_L|})$, with $|V_L|$ growing like $L^2$.

To enumerate those bit strings that correspond to actual local patches (i.e.,~for which the set $P_{(b_i)}$ is not empty), we can use a dynamic programming approach. To do so, we first extend the definition \eqref{eq-def-A} of $P_{(b_i)}$ to shorter bit strings $(b_i)$, namely  to $(b_i)\in \{0,1\}^m$, $1 \leq m \leq |V_L|$, by taking into account only the intersections up to the $m$-th place of the bit string.

Our algorithm then proceeds step by step by computing all nonempty $P_{(b_i)}$ for bit strings $(b_i)$ of length $1$, of length~$2$, and so on. By \eqref{eq-def-A}, we can go from a bit string of lenght $n$ to a bit string of lenght $n+1$ using the following recursion relation:
\begin{equation}
\begin{aligned}
P_{(b_i)\oplus 1} &= P_{(b_i)} \cap P^1(v_{i+1})\\
P_{(b_i)\oplus 0} &= P_{(b_i)} \cap P^0(v_{i+1})\,.
\end{aligned}
\label{P-split}
\end{equation}

Suppose we have computed a set $J_n$ of all bit strings $(b_i)$ of length $n$ and their associated sets $P_{(b_i)}$ for all $(b_i)$ where $P_{(b_i)}$ is not empty. We can then compute the set $J_{n+1}$ by computing, for every $(b_i) \in J_n$, the two intersections on the right hand sides of \eqref{P-split} and adding those of the two bit strings $(b_i) \oplus 0$ and $(b_i) \oplus 1$ for which the intersection is not zero. This procedure can be implemented algorithmically by simply maintaining a list $J$ of nonempty regions $P_{(b_i)}$ and the associated bit strings, as described in the pseudocode in the loop in line \ref{optimization-loop-to-avoid} of Algorithm \ref{algorithm-enumerate-tilings}. The following algorithm is formulated to decompose an area $R_0 \subseteq R$. While setting $R_0 = R$ will lead to an enumeration of all patches, it is sometimes advantageous to compute only the decompositions corresponding to $\kappa(z) \in R_0$ for some smaller $R_0$, as described below.

\begin{algorithm}[H]
\caption{ Enumerate the local patches $\mathcal{C}_L(x)$}
\label{algorithm-enumerate-tilings}
\begin{algorithmic}[1]
\State Initialize $J = \{(R_0, \text{` '})\}$.
\State Compute $V_L$ using Algorithm~\ref{algorithm-enumerate-candidates}
\ForAll{$v \in V_L$}
\State set $P^1(v) = R - \kappa(v)$.
\State Initialize $J_2 = \{\}$
\If{$R_0 \subseteq P^1(v)$}\label{optimization-start}
\State set $J \leftarrow \{(Q, s \oplus\text{`1'}) \,|\, (Q,s) \in J\}$
\State \textbf{continue} with next $v$
\EndIf
\If{$R_0 \cap P^1(v) = \varnothing$}
\State set $J \leftarrow \{(Q, s \oplus\text{`0'}) \,|\, (Q,s) \in J\}$
\State \textbf{continue} with next $v$
\EndIf\label{optimization-end}
\ForAll{$(Q, s) \in J$}\label{optimization-loop-to-avoid}
\If{$P^1(v) \cap Q \neq \varnothing$}
\State Add $(P \cap Q, s \oplus \text{`1'})$ to $J_{ 2}$
\EndIf
\If{$Q \backslash P^1(v) \neq \varnothing$}
\State Add $(Q \backslash P, s \oplus \text{`0'})$ to $J_{2}$
\EndIf
\EndFor
\State Update $J \leftarrow J_2$.
\EndFor
\end{algorithmic}
\end{algorithm}

Usually we decompose the entire region of acceptance $R$ according to the previous procedure. However, it can be useful to decompose only a smaller polygon $R_0 \subseteq R$. The reason for this is twofold. First, there are some symmetries of $R$ which can allow us to decompose a smaller region. For example, the tiling corresponding to $\kappa(z) = (k_1, k_2)$ is exactly the mirror image of the tiling corresponding to $(k_2, k_1)$. Therefore, considering
\begin{align*}
R_0 = R \cap \{ (x,y) \in \mathbb{R}^2 \,|\, x > 0, y > 0, y < x \}.
\end{align*}
is sufficient for enumerating all local patches up to all mirror symmetries. In that case one may replace   $R$   by $R_0$ in Algorithm~\ref{algorithm-enumerate-tilings}.

Apart from the case where one has to check only part of the acceptance region $R$ for symmetry reasons, splitting $R$ into sub-polygons can improve the performance of the algorithm. In the default case, for every candidate point $v_i \in V_L$, the intersection of the set $P^1(v_i)$ with all polygons $P_{(b_j)}$ distinguished up to that point has to be computed. The time taken for this appears to be quadratic in the final number $|\mathcal{C}_L|$ of local patches distinguished. If one only computes the decomposition of a smaller polygon $R' \subseteq R$, the intersection computation can be skipped in all cases where either $R' \subseteq P^1(v_i)$ or $R' \subseteq P^0(v_i)$, because the intersections are trivial (either empty or all of $P_{(b_j)}$). This is implemented in lines~\ref{optimization-start} to~\ref{optimization-end} of the above algorithm by replacing $R$ by $R'$. 
 In practice, we have found that splitting the region $R$ into many convex polygonal pieces $R_j'$, such that $\bigcup_j R_j' = R_0$ (we used $80$ pieces $R_j'$ for $L = 100$) greatly improved the running time of the algorithm. The fact that some tilings could occur in several of the pieces was found to play a negligible role in terms of performance. Additionally, this method allows the algorithm to be parallelized across multiple cores or nodes, since the computation for every piece $R_i'$ is independent of all others.

Having enumerated the candidate set, we now  describe how to use   the enumeration Algorithm \ref{algorithm-enumerate-tilings}  to check whether the Hamiltonian on each patch satisfies the condition in   Proposition~\ref{proposition-recall}. Specifically, we have to compute the norms
\begin{equation}
\label{eq:Dx}
D(x) =  \norm{\mathbf{1}_{\overline{B_l}(x)} (H_{A(x)} - \lambda)^{-1}  \mathbf{1}_{A(x)} H \mathbf{1}_{B_L(x) \backslash {A(x)}}} 
\end{equation}
% $D(x)$ such that
and check that 
\begin{align}
  D(x) < N^{-d/2}
\label{eq-resolvent-property}
\end{align}
for all $x \in \Gamma$. Let us first explain in detail how we choose $A(x)$. Since the norm of $(H_{A(x)} - \lambda)^{-1}$ is expected to fall off exponentially in the distance between $\mathbf{1}_{\overline{B_l}(x)}$ and $\mathbf{1}_{A(x)} H \mathbf{1}_{B_L(x) \backslash A(x)}$, $D(x)$ is expected to be minimal for $A(x) = B_{L - m}(x)$.
It turns out that this is true in most cases. However, 
if $\lambda$ is very close to the spectrum of $H_{B_{L-m}(x)}$, which happened in our simulations only for a 
 few local patches, then  
 the norm of $(H_{B_{L-m}(x)} - \lambda)^{-1}$ and thus $D(x)$ may become too large.
 In those cases we  computed  \eqref{eq:Dx} again with a different choice of $A(x)$.
 We found that already removing one site chosen at random from the edge set $\Gamma_\mathrm{AB} \cap (B_{L - m}(x) \backslash B_{L - m -1}(x))$ from $B_{L-m}(x)$ was usually enough to perturb the spectrum of $H_{B_{L-m}(x)}$ sufficiently to remove outliers in the value of $D(x)$.

To actually compute the norm in \eqref{eq:Dx}, the most obvious method would be to invert $(H_{A(x)} - \lambda)$, compute the matrix products as written, and then compute the norm. However, computing matrix inverses is expensive and prone to numerical error, and it would not be feasible at all for matrices of the size we are considering in this paper. The method of computing $D(x)$ we   present now, by contrast, even works with sparse matrices, which makes it efficient enough that it can be employed for values of $L$ where an investigation of the spectrum of $H_{B_L(x)}$ by direct diagonalization would not be possible (although many algorithms are of course available for approximating the spectrum of sparse Hermitian operators \cite{saad2011}).

Instead of computing the matrix product as written in the right-hand side of \eqref{eq:Dx}, we compute the individual entries of the matrix. Let $M(x) := \mathbf{1}_{\overline{B_l}(x)} (H_{A(x)} - \lambda)^{-1}  \mathbf{1}_{A(x)} H \mathbf{1}_{B_L(x) \backslash A(x)}$.
Then, for $\tilde{x} \in \Gamma \cap\overline{B_l}(x)$ and $\tilde{y} \in \Gamma \cap (B_L(x) \backslash A(x))$, we compute the matrix entries 
\begin{equation*}
M(x)_{\tilde{x}\tilde{y}} = \mathbf{1}_{\{\tilde{x}\}} (H_{A(x)} - \lambda)^{-1}  \mathbf{1}_{A(x)} H \mathbf{1}_{\{\tilde{y}\}} \, .
\end{equation*}
Depending on whether there are more $\tilde{x}$'s or $\tilde{y}$'s to consider, %(notice that for large $N$, the number of possibilities for $\tilde{x}$ decreases),
we compute either the vector 
\begin{align*}
\mathbf{1}_{\{\tilde{x}\}} (H_{A(x)} - \lambda)^{-1} \mathbf{1}_{A(x)} H
\quad\mbox{or}\quad
(H_{A(x)} - \lambda)^{-1} \mathbf{1}_{A(x)} H \mathbf{1}_{\tilde{y}}.
\end{align*}
This is easier because for a given vector $\mathbf{z}$ we can compute $(H_{A(x)} - \lambda)\mathbf{z}$ using the sparse matrix $H_{A(x)}$. The efficiency of the algorithm can be greatly increased by preparing a decomposition of $(H_{A(x)} - \lambda)$ that can be used to solve $(H_{A(x)} - \lambda)^{-1} \mathbf{z}$ for different vectors $\mathbf{z}$; we use a sparse LU decomposition \cite{superlu} for our computations. We then compute the norm $D(x)$ from these matrix entries. The following algorithm describes the computation we use to check \eqref{eq-resolvent-property}.

\begin{algorithm}[H]
\caption{Check equation \eqref{eq-resolvent-property}
\label{algorithm-check-resolvent}}
\begin{algorithmic}[1]
\State Compute the sparse LU factorization of $H_{A(x)} - \lambda$
\State Initialize an empty matrix $m \times n$ matrix $T$, where $m = |B_l(x)|_\Gamma$ and $n = |B_L(x) \backslash {A(x)}|_\Gamma$
\If{$n < m$} \label{if-resolvent-algorithm}
\ForAll{$\tilde{y} \in (B_L(x) \backslash A) \cap \Gamma$}
\State Set $b = (H\delta_{\tilde{y}})|_{A(x)}$ as a vector in $l^2({A(x)} \cap \Gamma)$
\State Solve $(H_{A(x)}- \lambda) y = b$ using the LU factorization
\State Put $y$ in the column of $T$ corresponding to $\tilde{y}$
\EndFor
\Else
\State Initialize an $m \times P$ matrix $\tilde T$, where $P = |{A(x)}|_\Gamma$.
\ForAll{$\tilde{x} \in B_l(x) \cap \Gamma$}
\State Set $b = \delta_{\tilde{x}}$ as a vector in $l^2(A \cap \Gamma)$
\State Solve $(H_{A(x)} - \lambda) y = b$ using the LU factorization
\State Put $y$ in the row of $\tilde T$ corresponding to $\tilde{x}$
\EndFor
\State Set $T = \tilde T H \mathbf{1}_{B_L(x) \backslash {A(x)}}$, an $m \times n$ matrix.
\State Set $D(x) = \norm{T}_{\mathrm{op}}$ and check $D(x) < N^{-d/2}$.
\EndIf
\end{algorithmic}
\end{algorithm}

The two branches of the ``if'' statement in line~\ref{if-resolvent-algorithm} of Algorithm \ref{algorithm-check-resolvent} always compute the same number, and the condition is only an optimization that transposes the matrix product in \eqref{eq-resolvent-property} in order to reduce the number of linear systems that have to be solved.

Finally, Algorithm \ref{algorithm-final} summarizes our general strategy, using the enumeration of the local patches from Algorithm~\ref{algorithm-enumerate-tilings} and checking the condition of Propositon~\ref{proposition-recall} on each of them using Algorithm~\ref{algorithm-check-resolvent}. The loop in line~\ref{outer-loop} can be performed on different computer nodes if necessary.

\begin{algorithm}[H]
\caption{Prove a gap at energy $\lambda$ \label{algorithm-final}}
\begin{algorithmic}[1]
\State Split the region $R$ into $n$ smaller polygons $R_1, \dots, R_j, \dots, R_n$ (for efficiency)
\ForAll{polygons $R_j$}\label{outer-loop}
\State Using Algorithm~\ref{algorithm-enumerate-tilings}, decompose $R_j$ into a number of polygons $P_{(b_i)}$ corresponding to local patches in $\mathcal{C}_L$.
\State Initialize $r_{\min, j} = \infty$, the minimum gap size
\ForAll{polygons $P_{(b_i)}$ corresponding to a bit string $(b_i)$}
\State Construct the Hamiltonian $H$ on $B_L(x)$, for an $x$ with $\kappa(p^{-1}(x)) \in P_{(b_i)}$
\State Generate a finite list $A_k$ of values of $A(x)$ to try. We always set $A_1= B_{L-m}(x)$, while $A_2, A_3, \dots, A_t$ are generated by removig random points from the edge of $A_1$.
\ForAll{$A = A_1, A_2, \dots, A_t$}
\State Set $A = B_{L-m}(x)$ \label{line-set-a}
\State Use Algorithm~\ref{algorithm-check-resolvent} to check equation \eqref{eq-resolvent-property}
\If{equation \eqref{eq-resolvent-property} is fulfilled}
\State Set $r_{\min,j} \leftarrow \min(r_{\min,j}, \epsilon],$ where $\epsilon$ is the maximum allowed by equation~\eqref{eq-conditions-epsilon-recall}.
\State \textbf{continue} with next polygon $P_{(b_i)}$
\EndIf
\EndFor
\If {Equation~\eqref{eq-resolvent-property} was not fulfilled for any $A$}
\State \textbf{end computation}, no gap could be proven
\EndIf
\EndFor
\State Let $r_{\min} = \min(r_{\min, 1}, \dots, r_{\min, n})$
\State \textbf{end computation}, the infinite Hamiltonian has a gap of size $r_{\min}$ at energy $\lambda$.
\EndFor
\end{algorithmic}
\end{algorithm}

\section{3. Fibonacci Quasicrystals}

\noindent
The Fibonacci quasicrystal is a simple one-dimensional quasicrystal that was studied even prior to the discovery of physical quasicrystals \cite{kohomoto1983, ostlund1983}.  In recent years, significant attention has been devoted to the mathematical rigorous study of the spectrum of the Hamiltonian associated to the Fibonacci quasicrystals \cite{jaga2021, even-dar_mandel_electronic_2008,damanik2014}. In particular, it has been proved that the spectrum of the  Fibonacci Hamiltonian is a Cantor set \cite{casdagli1986, suto1987}. In this section, we will describe how our method can be applied to systems in one dimension using the explicit example of the Fibonacci quasicrystal, which has the advantage that many of the constructions are easier to visualize in such case. In particular, we compute upper and lower bound for the distance to the spectrum for the Fibonacci Hamiltonian which clearly show the fractal structure of its spectrum, see Figure \ref{fig-fib-both-bounds}.

\begin{figure}[ht]
	\centering %\includegraphics{figures/supp/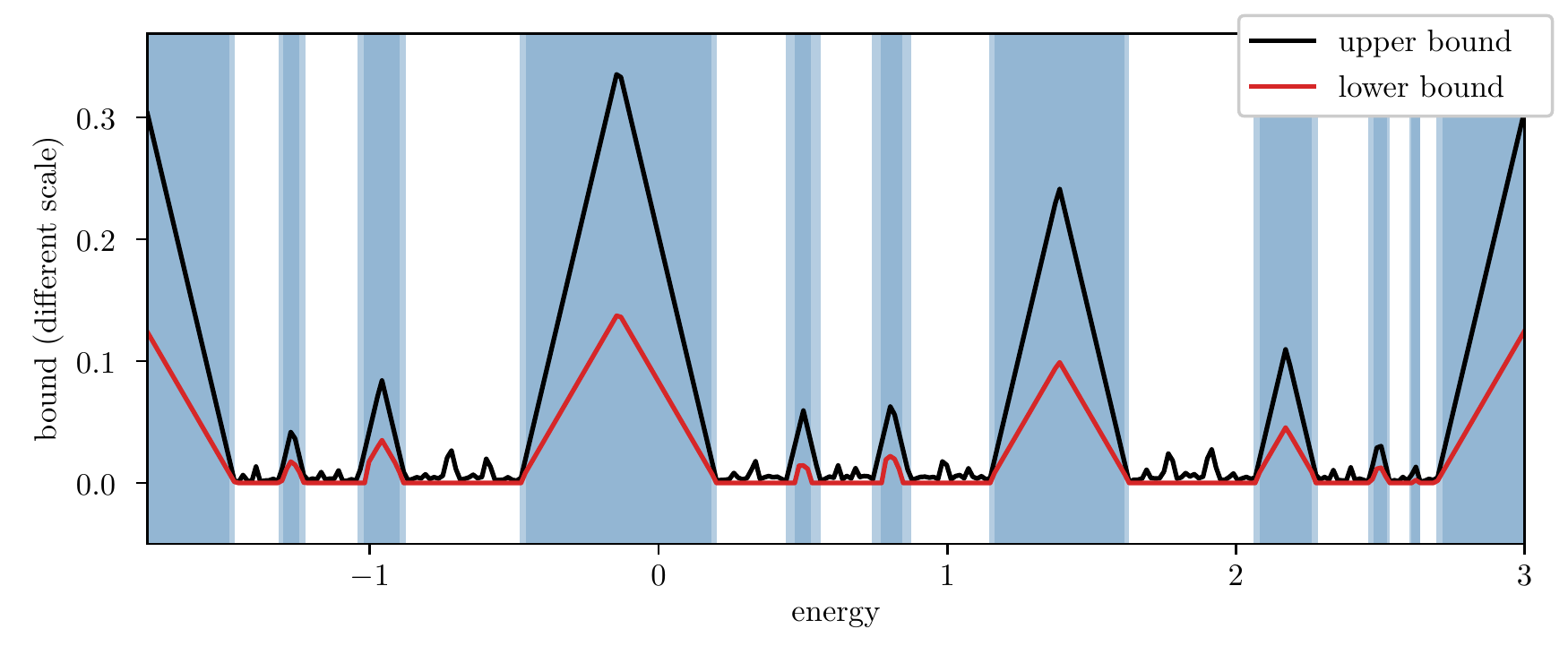}
	\includegraphics{sfig_gaps_wrapper.pdf}
	\caption{Upper and lower bounds for the distance to the spectrum for the Fibonacci Hamiltonian, defined by \eqref{FibonacciHamiltonian}, for $L = 500, N = 6, \alpha = 1$. The transparent blue intervals display the minimal and maximal sizes (defined as in Figure~4) of the gaps centered around the local maxima of the lower bound. For the Fibonacci quasicrystal, multiple gaps can be seen and proven to exist. As $L \to \infty$, the number of gaps will grow as the spectrum of the infinite Hamiltonian is a Cantor set.\label{fig-fib-both-bounds}}
\end{figure}

\subsection{Cut-and-project construction of the Fibonacci quasicrystal}

\noindent
As in the Ammann-Beenker case, we will define two projections, in this case from $\mathbb{R}^2 \rightarrow \mathbb{R}$, corresponding respectively to the real space and to the additional dimension:

\begin{align}
p = \begin{pmatrix}1 & \varphi\end{pmatrix} && \kappa = \begin{pmatrix}-\varphi & 1\end{pmatrix} ,
\end{align}
where $\varphi:=\frac{1+\sqrt5}{2}$ is the golden ration. Clearly the kernels of $p$ and $\kappa$ are again orthogonal.

The acceptance region in the case of the Fibonacci quasicrystal consists simply of the interval
\begin{align}
R = [0,1)\,.
\end{align}
This is the projection of the vertical interval $\{0\}\times[0,1)$ via $\kappa$.

We can then define the Fibonacci lattice as
\begin{align*}
\Gamma_{\text{Fib}} = \big\{\; p(z) \;\big|\; z \in \mathbb{Z}^2,\; \kappa(z) \in R \;\big\} \, .
\end{align*}
The condition $\kappa(z) \in R$ corresponds to the ``cutting'' step, the expression $p(z)$ is the ``projection'' step.

\begin{figure}[ht]
\includegraphics{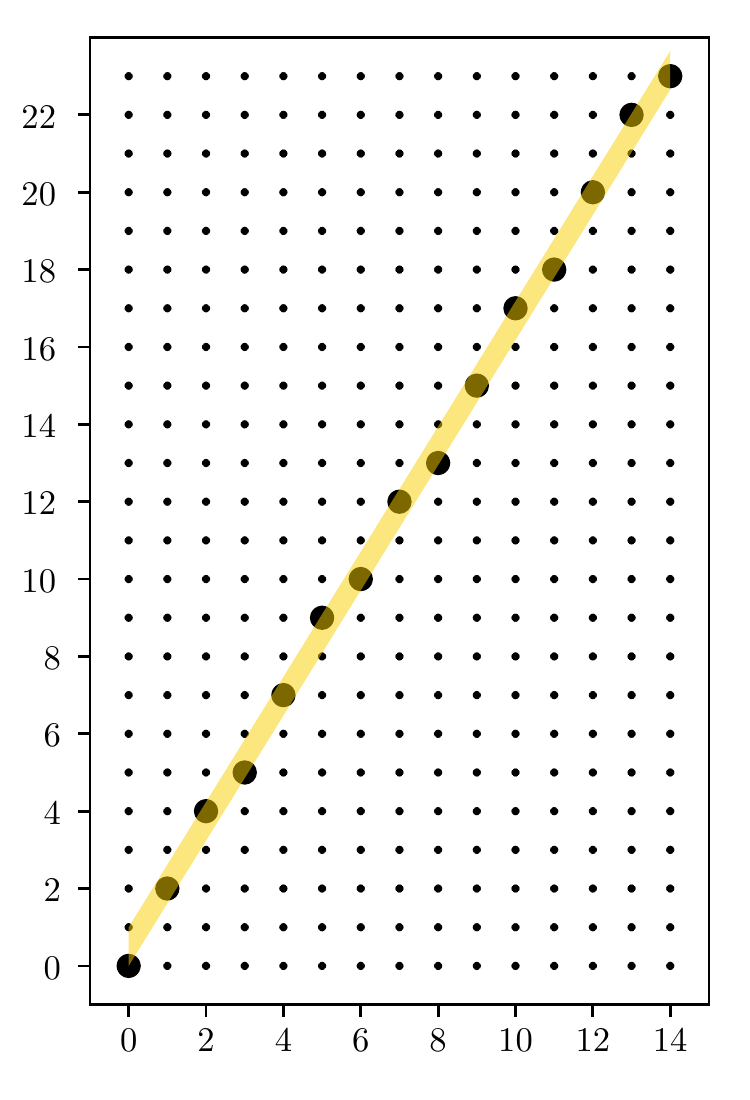}
\caption{Cut-and-project construction of the Fibonacci quasicrystal. \label{fig-three-offsets}}
\end{figure}

Figure~\ref{fig-three-offsets} contains a pictorial representation of this cut-and-project construction. The yellow shaded area in Figure~\ref{fig-three-offsets} shows points in $\mathbb{R}^2$ for which $\kappa(z) \in R$.

This definition of the Fibonacci quasicrystal is equivalent to the more common one that uses the substitution rules
\begin{align*}
\mathrm{S} \rightarrow \mathrm{L} && \mathrm{L} \rightarrow \mathrm{LS}\,.
\end{align*}
Starting from the string ``L'', this substitution rule gives a sequence of strings ``LS'', ``LSL'', ``LSLLS'', ``LSLLSLSL''$\dots$ in which the $n$-th string is a prefix of the $(n+1)$-th and a suffix of the $(n+2)$-th. This allows one to define a Fibonacci string that is infinite in both directions.

The equivalence is based on the fact that neighbouring points in the Fibonacci lattice form either ``long'' or ``short'' distances, corresponding to the letters ``L'' and ``S'' in the substitution definition. As one can see in Figure~\ref{fig-three-offsets}, successive points $z$ with $\kappa(z) \in R$ always differ by either the vector $(1, 1)$ or the vector $(1,2)$. Under the projection $p$, these vectors map to offsets
\begin{align*}
(1 + \varphi)\quad\text{and}\quad   (1 + 2 \varphi) \, .
\end{align*}
The intervals of length $(1 + 2\varphi)$ are the ``long'' intervals L and the intervals of length $(1 + \varphi)$ are the ``short'' intervals S. The quotient between the lengths is again $\varphi$.

\subsection{Enumeration of local patches}

\noindent
Let us show how to enumerate all local patches $\mathcal{C}_L$ of the Fibonacci quasicrystal. As in the case of the Ammann-Beenker tiling, for any $x \in \Gamma_{\text{Fib}}$, the \textit{local patch around $x$} is the set
\begin{align*}
\mathcal{C}_L(x) = \left\{ \,\tilde x - x \,\middle|\, \tilde x \in \Gamma_{\text{Fib}} \cap B_L(x)\,\right\}.
\end{align*}
Since every point in $x \in \Gamma_{\text{Fib}}$ has exactly one preimage $z \in \mathbb{Z}^2$ with $p(z) = x$ and $\kappa(z) \in R$, by exploiting the linearity of $p$ and $\kappa$, the set of local patch can be rewritten in the usual form as 
\begin{align*}
\mathcal{C}_L(x) = \left\{ \,p(v) \;\middle|\; v \in \mathbb{Z}^2, \kappa(v) \in R - \kappa(z), |p(v)| < L \,\right\}.
\end{align*}
This description of the local patch only depends on $\kappa(z)$, where $z$ is the integer preimage of $x$ under $p$ (which is unique by irrationality considerations). Our enumeration algorithm will decompose the interval $[0,1)$ in which $\kappa(z)$ lies into subintervals corresponding to different local patches. 

It turns out that for any given $L$, there are only finitely many points $v \in \mathbb{Z}^2$ which can fulfill the two conditions $\kappa(v) \in R - \kappa(z)$ and $|p(v)| < L$, across all $z \in \mathbb{Z}^2$ with $\kappa(z) \in R$. Indeed, for a point $v \in \mathbb{Z}^2$ to be able to satisfy the condition $\kappa(v) \in R - \kappa(z)$ for any $z$ with $\kappa(z) \in R$, we must have
\begin{align*}
\kappa(v) \in R + (- R)
\end{align*}
where $R +(- R)$ denotes the Minkowski sum. This equals the interval $R +(- R) = (-1, 1)$.  As for the Ammann-Beenker tiling, we define the set of ``candidate points'' as 
\begin{equation*}
V_L:=\{v \in \mathbb{Z}^2 \, | \, \kappa(v) \in (-1, 1) \, ; \, p(v) \in (- L, L)\, \}.
\end{equation*}

\begin{figure}[ht]
\centering \includegraphics{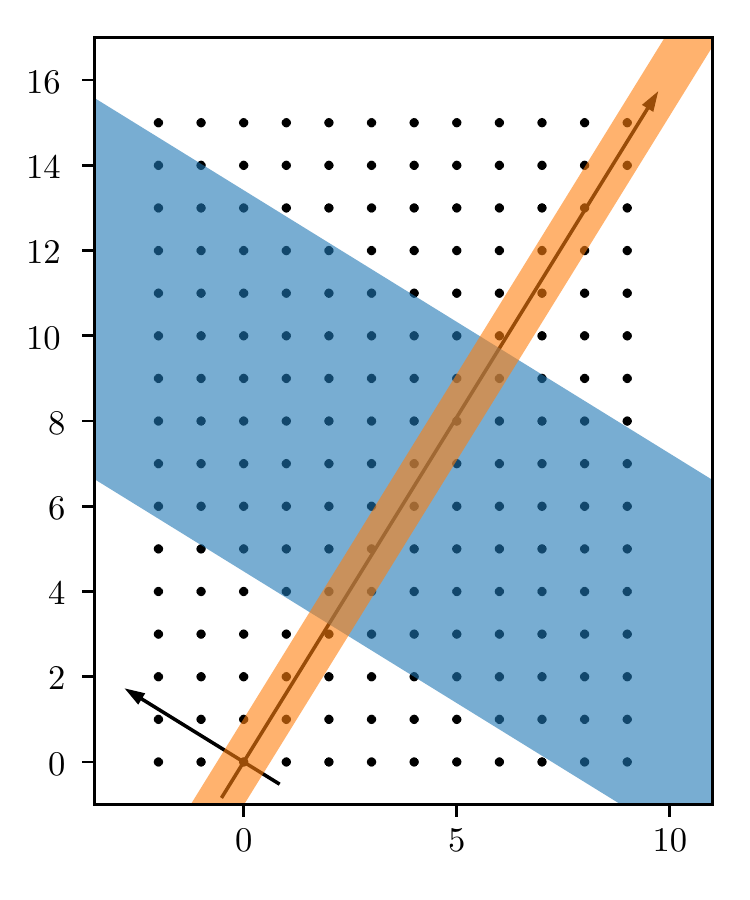}
%\centering \includegraphics{figures/supp/sfig2_wrapper.pdf}
\caption{The set of candidate points is defined by the two constraints $|p(v)| < L$ and $\kappa(v) \in (-1,1)$. In this picture, the constraint on $p(v)$ is indicated by the blue shaded area, and the restriction on $\kappa(v)$ by the orange shaded area. Because the projections have orthogonal kernel, it is clear that the intersection of both sets is compact and contains only finitely many integer points. \label{fig-candidates}}
\end{figure}

Because the two projections $p$ and $\kappa$ have orthogonal kernels, the two sets where the two conditions $\kappa(v) \in (-1, 1)$ and $p(v) \in (-L, L)$ respectively are fulfilled are two orthogonal strips, whose intersection is a rectangle, as shown in Figure~\ref{fig-candidates}. Any finite rectangle contains a finite number of points, therefore the set $V_L$ ist finite.

It is also easy to enumerate these candidate points algorithmically. To do so we describe here an alogorithm similar to Algorithm \ref{algorithm-enumerate-candidates} that we used for the Ammann-Beenker tiling. 
In this case we can choose any vector in $\mathbb{R}^2$ to be a right inverse for $p$ (up to a constant factor) and we just need to impose that such vector is in the kernel of $\kappa$. Consider the vector $t:=(1 \;\; \varphi )$, we have $ pt=(1+\varphi^2)$ and we also have that $\kappa t =0$.  Therefore, every point $v \in V_L$ can be written as 
\begin{equation*}
v=t y + q e_2
\end{equation*}
for some $y,q \in \mathbb{R}$. Moreover, $v \in \mathbb{Z}^2$ implies that $y \in \mathbb{Z}$, while $\kappa v \in (-1,1)$ implies that $q \in (-1,1)$. Thus, any candidate point is of the form $v=(v_1,v_2)$,  with $v_1 \in \mathbb{Z}$ and $v_2 \in \varphi v_1 + (-1,1)$. If we then consider also the condition $pv \in (-L,L)$, we get that $v_1 \in \frac{(-L,L) + (-1,1)}{1+\varphi^2}$. In particular, if we want to count all the candidate points $v$ such that $pv \in I$, with $I$ a given interval, we have to enumerate all the integers $v_1,v_2$ such that 
\begin{equation*}
v_1 \in \frac{I+ (-1,1)}{1+\varphi^2} \qquad v_2 \in \varphi v_1 + (-1,1) \,
\end{equation*}
which results in two loops (over $v_1$ and $v_2$) similar two the loops in Algorithm~\ref{algorithm-enumerate-candidates}.

As we have explained in Section 2 \ref{sm-sec:DescriptionAlg}, once a set of candidate points has been computed, we can categorize the values $\kappa(z)$ by which local patch we get. The local patch is completely determined by which points $\tilde{z}:=z + v \in \mathbb{Z}^2$ become part of the tiling. All points in the candidate set $V_L$ fulfill
\begin{align*}
|p(v)| < L.
\end{align*}
Therefore $|p(z + v) - p(z)| < L$. Thus, the condition $|p(\tilde z) - p(z)| < L$ is always fulfilled for all points in our candidate set.

Instead, whether the other condition is fulfilled, namely $\kappa(\tilde z) \in R$, depends on the base point $z$, or more precisely on $\kappa(z)$. We have $\kappa(\tilde z) = \kappa(z + v) = \kappa(z)  + \kappa(v) \in R$, which implies
\begin{gather}
%\kappa(\tilde z) = \kappa(z + v) = k + \kappa(v) \in R\\
\kappa(z) \in R-\kappa(v) .\label{eq-whichisanint}
\end{gather}
For a given candidate point $v \in V_L$, the right hand side of \eqref{eq-whichisanint} is just an interval, thus for every candidate point $v \in V_L$, we get one such interval. The local patch of scale $L$ around $p(z)$ only depends on which of these intervals the point $\kappa(z)$ lies in.

Let us enumerate the candidate points as $V_L = \{v_i \,|\, i = 1, \dots, |V_L| \}$. This corresponds to an enumeration of the corresponding intervals $I_i =
R - \kappa(v_i)$. Now, a local patch can be described by a bit string $b \in \{ 0, 1 \}^{|V_L|}$, by setting $b_i = 1$ if $\kappa(z) \in I_i$ and $b_i = 0$ otherwise.

Of course, not every bit string will occur in this way. For every bit string $b$, the portion of the acceptance region for which this patch occurs is given by
\begin{gather*}
P_{(b_i)} = \bigcap_{i = 1}^{|V_L|} P^{b_i}(v_i)
\intertext{where}
P^{1}(v_i) = R \cap I_i, \quad
P^{0}(v_i) = R \setminus P^1(v_i).
\end{gather*}
For every $k = \kappa(z) \in P_{(b_i)}$, the bit string associated to the local patch around $x=p(z)$ is exactly $P_{(b_i)}$. If a bit string $b=(b_i)$ does not actually occur, the set $P_{(b_i)}$ will be empty.

To enumerate all actually possible bit strings $(b_i)$, we can start with only the first candidate point $\{v_1\}$, then with the first two $\{v_1, v_2\}$ and so on, proceeding iteratively. As in Section 2\ref{sm-sec:DescriptionAlg}, we extend the notation $P_{(b_i)}$ to bit strings of length $|b|\leq |V_L|$ simply by setting
\begin{align*}
P_{(b_i)} = \bigcap_{i = 1}^{|b|} P^{b_i}(v_i) \, .
\end{align*}
For every candidate we add, we can go from a bit string of lenght $n$ to a bit string of lenght $n+1$ using the following recursion relation:
\begin{equation*}
\begin{aligned}
P_{(b_i)\oplus 1} &= P_{(b_i)} \cap P^1(v_{i+1})\\
P_{(b_i)\oplus 0} &= P_{(b_i)} \cap P^0(v_{i+1})\,.
\end{aligned}
\end{equation*}

This suggests a simple algorithm for the enumeration of all nonempty $P_{(b_i)}$, with string lenght $|b|=|V_L|$, and their associated bit strings.

The algorithm consists of $|V_L|+1$ steps, $k = 0, \dots, |V_L|$. At every step, we maintain a list $\ell_k$ of all bit strings $b$ of length $k$ for which $P_{(b_i)}$ is nonempty, together with the associated $P_{(b_i)}$. Initially, the list only consists of the empty bit string $\varnothing$. The associated set is $S_\varnothing = [0,1)$.

In the $k$-th step, we are given the list $\ell_{k-1}$ and want to compute $\ell_k$. So every bit string $b \in \ell_{k-1}$, we compute the two sets $S_{b0}$ and $S_{b1}$ as described above, which simply amounts to an intersection of intervals. In many cases, one of these sets will be empty (but not both). We then add to the list $\ell_k$ all those strings $b'=(b'_i)$ for which $P_{(b'_i)}$ is not empty. In the end, the list $\ell_{|V_L|}$ will contain all the bit strings $b$ for which $P_{(b_i)} \neq \varnothing$. See Figure~\ref{fig-split} for a visualization of this interval splitting procedure.

\begin{figure}[ht]
\centering \includegraphics{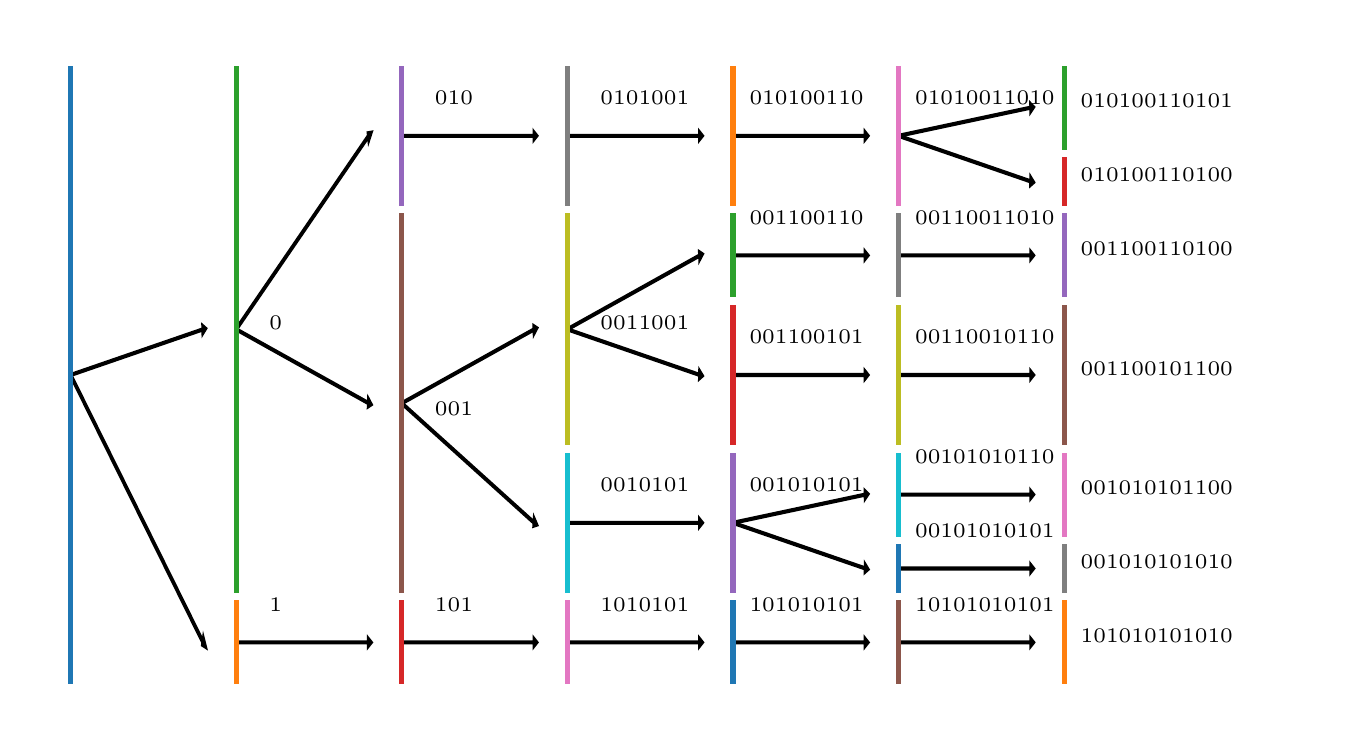}
%\centering \includegraphics{figures/supp/s}
\caption{Successive splitting of the intervals as more candidate points are added, for $L = 10$. The intervals correspond to the sets $P_{(b_i)}$, the strings to the partial bit strings $b$. \label{fig-split}. The large blue interval on the left hand side is the interval $[0,1)$ associated to the empty string. Steps where no interval is split are not shown.}
\end{figure}

\subsection{Gap bounds}

\noindent
With the method of enumerating local patches of the tiling described above, we will now describe how our lower bound and the upper bound of \cite{colbrook_2019} can be computed in the Fibonacci case.

We consider the Fibonacci Hamiltonian given by the standard Laplacian on $\ell^2(\mathbb{Z})$ with an electric potential given by the Fibonacci sequence. That is, if we denote by $f(n)$ the sequence with $f(n) = 1$ if the $n$-th character of the infinite Fibonacci sequence is $L$ and $f(n) = 0$ otherwise, we can define an operator on $\ell^2(\mathbb{Z})$ by
\begin{equation}
\label{FibonacciHamiltonian}
\begin{aligned}
H_{xy} = \begin{cases}
\alpha f(x) & x = y\\
-1 & |x - y| = 1\\
0 & \text{otherwise}
\end{cases}\,,
\end{aligned}
\end{equation}
where $\alpha \in \mathbb{R}$ is a constant. For $\alpha = 0$, the operator is the standard Laplacian on $\ell^2(\mathbb{Z})$ with absolutely continuous spectrum. As $\alpha$ moves away from $0$, gaps will open in the spectrum of the Laplacian. An enumeration of the local patch of the Fibonacci string can be made similarly to what is described above.

The upper bound of \cite{colbrook_2019} becomes quite simple in a one-dimensional setting like ours. Indeed, following the procedure described in \cite{colbrook_2019}, in order to compute the upper bound to the distance to the spectrum of a given energy $\lambda$, one is reduced to analyze the rectangular matrix  
\begin{align*}
B = \mathbf{1}_{[-(n+1), \dots, n+1]}(H - \lambda \mathbf{1}) \mathbf{1}_{[-n,n]}.
\end{align*}
Let $s$ be the lowest eigenvalue of $B^* B$, then the upper bound on the distance from $\lambda$ to the spectrum that is given in \cite{colbrook_2019} is simply $\sqrt{s}$. Although we could verify that the algorithm utilizing the Cholesky decomposition to compute $s$ given in \cite{colbrook_2019} is faster for large sparse matrices, a standard eigenvalue decomposition was sufficient to compute the values corresponding to the black line in Figure~\ref{fig-fib-both-bounds}.

To compute our lower bound, an enumeration of the possible tilings as outlined above was first made. Using $N = 6$ and $r = 1$, we split each interval $B_L(x)$ into an outer part $B_L(x) \backslash A(x)$, a middle part $A(x) \backslash B_l(x)$ and an inner part $B_l(x)$. It turned out that it was sufficient to always pick the maximal choice $A(x) = B_{L - 1}(x)$. This way, the outer part $B_L(x) \backslash A(x)$ has only two elements. Thus, to compute the norm in \eqref{eq-conditions-epsilon-recall}, we only need to compute
\begin{align*}
\mathbf{1}_{\overline{B_l}(x)} (H_{A(x)} - \lambda)^{-1} \mathbf{1}_{A(x)} H \mathbf{1}_{ B_L(x) \backslash A(x)}\mathbf{\tilde{x}}_i = \mathbf{1}_{\overline{B_l}(x)} (H_{A(x)} - \lambda)^{-1} \mathbf{1}_{A(x)} H \mathbf{\tilde{x}}_i  
\end{align*}
for the two vectors $\tilde{\bf x}_1 = \delta_{x+L}$ and $\tilde{\bf x}_2 = \delta_{-L+x}$. The constant $M^{-1}(x)$ may be computed as the lowest eigenvalue of the matrix
\begin{align*}
\mathbf{1}_{\overline{B}_{l(x)}} (H - \lambda) \mathbf{1}_{B_{L-1}(x)}\,.
\end{align*}
The values of this lower bound are plotted as the red curve in Figure~\ref{fig-fib-both-bounds}. It can be seen that our method can resolve even the fine fractal gap structure of the Fibonacci Hamiltonian and gives exact bounds on the extent of gaps of different orders of magnitude.

\section{4. Computational complexity of spectral computations}

\noindent
The extistence of the algorithm presented in this paper seemingly contradicts the statement in \cite{colbrook_2019} that it is impossible to compute spectra with error control in a general setting. In fact, even the spectrum of a diagonal operator on an infinite Hilbert space cannot be computed by an algorithm accessing the matrix elements one-by-one. It is only by requiring the additional structure of finite local complexity that such an algorithm can be found. This can be further elucidated in the framework of the \textit{solvability complexity index} (SCI) \cite{hansen2011, benartzi2015, colbrook2019foundations}: We show that computing the spectrum of an operator of finite local complexity is a problem with solvability complexity index $1$, whereas it is known that the index is $2$ in the general case \cite{colbrook2019foundations}.

\begin{definition} A \textit{computational problem} consists of a tuple $(\Omega, \Lambda, (\mathcal{M}, d), \Xi)$. Here $\Omega$ is the domain, or the set of problems. (In our case, $\Omega$ will be a set of operators on a Hilbert space.) The metric space $(\mathcal M, d)$ is the set of possible solutions, which in our case will be the power set of $\mathbb{R}$ equipped with the Hausdorff metric. The \textit{problem function} $\Xi : \Omega \rightarrow \mathcal M$ describes the exact solution of the problem (for example, the function that maps every operator to its spectrum). Finally, $\Lambda$ is a set of functions, $f_i : \Omega \to \mathbb{R}$, the \textit{evaluation functions}, which the algorithm uses to access information on the given object in $\Omega$.
\end{definition}

\begin{definition}
\label{def-sci}
A computational problem $(\Omega, \Lambda, (\mathcal M, d), \Xi)$ is said to have \textit{solvability complexity index $1$} if and only if there exists a sequence $\Gamma_n$ of functions $\Gamma_n : \Omega \to \mathcal M$ such that:
\begin{enumerate}
\item Every function $\Gamma_n$ can be computed by a finite number of divisions, radicals, and comparisons from a finite number of evaluations $f_i : \Omega \to \mathbb R$. More precisely, we can ask for the computations to be performed by a machine in the Blum-Shub-Smale (BSS) model \cite{blum1998}. Which and how many evaluations are performed can be decided based on the previous evaluations.
\item For every $A \in \Omega$, the computations $\Gamma_n(\Omega)$ converge to $\Xi$ in a controlled way, that is
\begin{align*}
d(\Gamma_n(A), \Xi(A)) < 2^{-n}
\end{align*}
\end{enumerate}
\end{definition}

By passing to a subsequence, this definition of solvability complexity index $1$ is equivalent to giving an algorithm $\Gamma_n$ and an error control function $E_n(A)$ such that $d(\Gamma_n(A), \Xi(A)) < E_n(A)$ for all $A \in \Omega$ and $E_n(A) \to 0$ as $n \to \infty$~\cite{benartzi2015}.

\begin{definition}
A \textit{Hamiltonian on a set} $\Gamma$ is a bounded self-adjoint operator on the Hilbert space $\ell^2(\Gamma, \mathcal H)$, for some finite-dimensional Hilbert space $\mathcal H$.
\end{definition}

\begin{definition}
A set $\Gamma \subseteq \mathbb{R}^d$ is called \textit{uniformly discrete} if there exists an $\epsilon > 0$ such that every two points $x, y \in \Gamma$ have distance $\|x-y\|_{\infty} \geq \epsilon$. The \textit{local patch around} $x$ \textit{at scale} $L$ is the set $\mathcal{C}_L(x):=\left\{ y-x | y \in \Gamma \cap B_L(x) \right\}$.  A set $\Gamma$ is said to have \textit{finite local complexity} if for every $L$, the set
\begin{align*}
\mathcal{C}_L = \big\{ \mathcal{C}_L(x) \,|\, x \in \Gamma \big\}\,,
\end{align*}
called the set of \textit{local patches}, is finite.
\end{definition}

\begin{definition}
\label{definition-finite-local-complexity}
A Hamiltonian $H$ on a set $\Gamma$ is said to have finite local complexity if there is a function $W : \mathbb{N} \to \mathbb{N}$ such that, for every $L \in \mathbb{N}$ and every $x \in \Gamma$, there is an $y \in B_{W(L)}(0) \cap \Gamma$ such that $H_{B_L(x)}$ and $H_{B_L(y)}$ are equivalent in the following sense:
\begin{align*}
H_{B_L(x)} = T_{y-x}^* U^* H_{B_L(y)} U T_{y-x}\,
\end{align*}
where $T_{y-x}$ is the translation operator and $U$ is a diagonal unitary operator. (The operator $U$ corresponds to a change of gauge for the magnetic Laplacian.) We say that $H$ has \textit{maximum hopping length} $m$ if the matrix entries $H_{xy} = 0$ for all $x, y \in \Gamma$ with $d(x,y) > m$
\end{definition}

\noindent
Now, the following no-go theorem is proven in \cite{colbrook_2019}.

\begin{theorem}
For every $N_{\max} \in \mathbb{R}$ and $m \in \mathbb R$, define the set $\Omega$ as the set of all Hamiltonians $H$ on uniformly discrete sets $\Gamma$ with $\norm{H}_{\text{op}} \leq N_{\max}$ and maximum hopping length $m$. Then the computational problem $(\Omega, \Lambda, (\mathcal M, d), \Xi|_{\Omega})$ has solvability complexity index $> 1$.
\end{theorem}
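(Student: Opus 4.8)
The plan is to prove the theorem by an adversary (diagonalization) argument that exhibits two Hamiltonians in $\Omega$ which no finite-query algorithm can distinguish but whose spectra are far apart in the Hausdorff metric. By the remark following Definition~\ref{def-sci}, solvability complexity index $1$ is equivalent to the existence of a single sequence $\Gamma_n$ of algorithms with error control; it therefore suffices to show that no sequence $\Gamma_n$ satisfying $d(\Gamma_n(A),\Xi(A)) < 2^{-n}$ for all $A\in\Omega$ can exist. I would restrict attention to the simplest subfamily: diagonal operators on $\Gamma=\mathbb{Z}$ with $\mathcal H=\mathbb{C}$, which are automatically uniformly discrete, have hopping length $0\le m$, and have operator norm equal to the supremum of their diagonal entries, hence lie in $\Omega$ as soon as that supremum is at most $N_{\max}$.

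For the core construction, set $h:=\min\{1,N_{\max}\}>0$ and let $H_0:=0$, so that $\Xi(H_0)=\{0\}$. Suppose toward a contradiction that an error-controlled $\Gamma_n$ exists, and fix a level $n$ with $2^{-n}<h/2$. Because a Blum--Shub--Smale machine must halt to return $\Gamma_n(H_0)$, the run on input $H_0$ queries only a finite set $\mathcal{Q}\subset\mathbb{Z}^2$ of matrix positions (the indices are chosen adaptively from previously read values, but there are finitely many of them in total). Since $\mathbb{Z}$ is infinite, I may choose a site $k$ with $(k,k)\notin\mathcal{Q}$ and define $H_1$ to be the diagonal operator with $(H_1)_{kk}=h$ and all other entries zero, so that $\Xi(H_1)=\{0,h\}$ and $H_1\in\Omega$. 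By construction $H_1$ agrees with $H_0$ at every position in $\mathcal{Q}$.

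I would then argue by induction on the query step that the runs of $\Gamma_n$ on $H_0$ and on $H_1$ are bit-for-bit identical: the first queried index is computed from no data and so is the same for both inputs; and whenever the two runs have agreed up to some step, the next queried position lies in $\mathcal{Q}$, where the two operators coincide, so the returned value --- and hence the next computed query index --- again agree. Consequently $\Gamma_n(H_1)=\Gamma_n(H_0)=:S$. The error-control hypothesis then gives $d(S,\{0\})<2^{-n}$ and $d(S,\{0,h\})<2^{-n}$, whence the triangle inequality for the Hausdorff metric yields $d(\{0\},\{0,h\})<2\cdot 2^{-n}<h$, contradicting $d(\{0\},\{0,h\})=h$. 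Since the contradiction arises at an arbitrary admissible level $n$, no error-controlled tower $\Gamma_n$ can exist, and the index is $>1$.

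The main obstacle is not the spectral computation itself but the careful bookkeeping of adaptivity: one must ensure that the finitely many, yet input-dependent, queries made on $H_0$ are the only positions the algorithm ever inspects, so that perturbing the operator strictly outside $\mathcal{Q}$ leaves the entire computation unchanged. This is precisely what the inductive run-equality argument secures, and it is the step where the choice of $\Lambda$ (coordinatewise access to the matrix entries, the ``matrix elements one-by-one'' access model) is essential: the argument would require adjustment if evaluations returned genuinely global or only approximate information, although the same hidden-bump idea survives inexact reads by enlarging $h$ relative to the oracle tolerance.
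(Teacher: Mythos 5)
Your proposal is correct and follows essentially the same route as the paper, which itself only sketches the argument (citing Colbrook et al.): an adversarial perturbation of a diagonal operator at a single site outside the finitely many matrix positions an error-controlled algorithm can inspect, with the inductive run-equality argument handling adaptivity. The only (degenerate) caveat is that your choice $h=\min\{1,N_{\max}\}$ implicitly assumes $N_{\max}>0$, which is the only case in which the statement is non-vacuous anyway.
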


The proof of this no-go theorem is surprisingly simple and generalizes across all computational architectures. In fact, the authors remark that it even generalizes to the class of \textit{diagonal operators}. It is based on the fact that any algorithm can only see a finite part of a given infinite matrix, and by changing the matrix in one place outside this range, we can change the spectrum in a way that cannot be detected by the given algorithm. (The counterexample matrix thus has to be adapted to the algorithm.)

It was suggested in \cite[Remark 9.1]{benartzi2015} that it might be a fruitful subject of research to find some additional structure under which one can reduce the SCI, and that this could lead to new algorithms. We show here, using the algorithm described in the previous sections, that adding the structure of \textit{finite local complexity} lowers the SCI of computing the spectrum of self-adjoint operators with bounded dispersion from two to one.

\newcommand{\fc}{^{\mathrm{fc}}}
\begin{theorem}
\label{theorem-fc-complexity}
Let $\Omega\fc$ be the set of all tuples $(\Gamma, H)$, where $\Gamma$ is a uniformly discrete subset of $\mathbb{R}^d$ and $H$ is a Hamiltonian on the set $\Gamma$ with finite local complexity and finite hopping length. We need to introduce several evaluation functions to allow the algorithm to access the geometry of $\Gamma$. We assume that the set $\Gamma$ comes with an enumeration $(\gamma_k)_{k \in \mathbb{N}}$ of its points such that $\norm{\gamma_k}$ is nondecreasing.
\begin{enumerate}
\item for every $k_1, k_2 \in \mathbb{N}$, let $f^{\mathrm{mat}}_{k_1, k_2}(\Gamma, H) := \langle \gamma_{k_1} | H | \gamma_{k_2} \rangle$ evaluate the matrix entries of $H$.
\item for every $k \in \mathbb{N}$ and $l \in \{1, \dots, d\}$, let $f^{\mathrm{pos}}_{k, l}(\Gamma, H) := (\gamma_k)_l$ be the $l$-th coordinate of the $k$-th point.
\item for any $n \in \mathbb{N}$, let $f^{\mathrm{flc}}_n(\Gamma, H) := R$ be smallest radius $R \in \mathbb{R}$ such that every local patch of size $n$ (that is, every set $(\Gamma \cap B_n(x)) - x$ for $x \in \Gamma$) occurs for some $x \in B_R(0)$. (This function quantifies the finite local complexity)
\item let $f^{\mathrm{hop}}(H, \Gamma)$ be the maximum hopping length of $H$.
\end{enumerate}
77Let the set $\Lambda\fc$ comprise all the functions $f^{\mathrm{mat}}_{k_1, k_2}, f^{\mathrm{pos}}_{k, l}, f^{\mathrm{flc}}_n, f^{\mathrm{hop}}, $ and $f^{\mathrm{norm}}$. Finally, as before, define $(\mathcal{M}\fc, d)$ as the power set of $\mathbb{R}$ equipped with the Hausdorff distance and let $\Xi\fc(\Gamma, H) = \sigma(H)$, where $\sigma(H)$ denotes the spectrum of $H$.

The computational problem $(\Omega\fc, \Lambda\fc, (\mathcal M\fc, d), \Xi\fc)$ so defined is $\in \Delta_1^A$. (That is, its SCI is $1$.)
\end{theorem}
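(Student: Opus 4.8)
The plan is to exhibit, for every resolution level $n$, a computable approximation $\Gamma_n(\Gamma,H)$ of the spectrum together with a computable error bound $E_n(\Gamma,H)\to 0$ dominating the Hausdorff distance $d(\Gamma_n,\sigma(H))$; by the error-control characterisation of $\Delta_1^A$ recalled after Definition~\ref{def-sci}, this yields $\mathrm{SCI}=1$. The engine is a pair of \emph{computable} two-sided estimates on the distance-to-spectrum function $g(\lambda):=\inf_{\mu\in\sigma(H)}|\lambda-\mu|$, which for self-adjoint $H$ equals $\inf_{\norm{\psi}=1}\norm{(H-\lambda)\psi}$. First I would use finite local complexity to reduce every quantity below to finitely much data: from $f^{\mathrm{hop}}$ I read the hopping length $m$, from $f^{\mathrm{norm}}$ a bound on $\norm H$, and combining the geometric oracle $f^{\mathrm{flc}}_L$ with the finite local complexity of $H$ (Definition~\ref{definition-finite-local-complexity}) I fix a radius $R(L)$ within which every local environment $H_{B_L(x)}$ occurs up to the gauge equivalence there. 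Enumerating $\Gamma\cap B_{R(L)}(0)$ through $f^{\mathrm{pos}}$ and reading matrix entries through $f^{\mathrm{mat}}$ then yields finitely many truncations representing all such environments, and every spectral quantity used below is invariant under that gauge.

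For the upper estimate I set
\[
u_L(\lambda):=\min_{x\in\Gamma\cap B_{R(L)}(0)}\ \sigma_{\min}\!\big(\mathbf 1_{B_L(x)}(H-\lambda)\mathbf 1_{B_{L-m}(x)}\big),
\]
the smallest singular value being computed from the finite matrices above. A normalised minimiser $\psi$ is supported in $B_{L-m}(x)$, so $(H-\lambda)\psi$ lives in $B_L(x)$ and $\norm{(H-\lambda)\psi}=u_L(\lambda)\,\norm{\psi}$, whence $u_L(\lambda)\ge g(\lambda)$ for every $\lambda$. Conversely, for $\lambda\in\sigma(H)$ Weyl's criterion supplies normalised quasimodes with $\norm{(H-\lambda)\psi}\to0$, which may be taken finitely supported since $H$ is bounded; for $L$ large the local environment of such a support coincides with one of the enumerated patches, whence $u_L(\lambda)\to0$. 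As singular values are $1$-Lipschitz in $\lambda$, so is $u_L$, and I evaluate it on a grid of mesh $<2^{-n}$ inside $[-\norm H,\norm H]$.

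For the lower estimate I feed each truncation into the verifiable criterion of Proposition~\ref{proposition-recall}: with $A(x)=B_{L-m}(x)$ I compute $D(x)$ and $M(x)$ from the finite resolvents $(H_{A(x)}-\lambda)^{-1}$, and whenever $D(x)<N^{-d/2}$ holds for all enumerated $x$ the proposition returns an admissible $\epsilon$; setting $\epsilon_L(\lambda)$ equal to this value (and $0$ otherwise), Theorem~\ref{main-theorem-recall} certifies $(\lambda-\epsilon_L(\lambda),\lambda+\epsilon_L(\lambda))\cap\sigma(H)=\varnothing$, so $\epsilon_L(\lambda)\le g(\lambda)$. The crucial point is that every true gap is eventually certified: if $g(\lambda)>0$ then $\lambda$ lies in the resolvent set, and a Combes--Thomas estimate forces $\norm{\mathbf 1_{\overline{B_l}(x)}(H_{A(x)}-\lambda)^{-1}\mathbf 1_{B_L(x)\setminus A(x)}}_\mathrm{op}$ to decay exponentially in the separation $\sim L/N$ between the inner ball and the boundary shell, so $D(x)\to0$ and the criterion fires for large $L$; since simultaneously $M(x)\to\norm{(H-\lambda)^{-1}}=g(\lambda)^{-1}$, one gets $\epsilon_L(\lambda)\to N^{-d/2}g(\lambda)>0$ on every gap. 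This is weaker than $\epsilon_L\to g$, but it is all that is needed: the certified-gap region exhausts $\{g>0\}$ as $L\to\infty$.

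Finally I assemble the output and the bound. I would set $\Gamma_n:=\{\lambda:\ u_{L(n)}(\lambda)\le 2^{-n}\}$, so that every point of $\Gamma_n$ has $g(\lambda)\le u_{L(n)}(\lambda)\le 2^{-n}$ and lies within $2^{-n}$ of $\sigma(H)$, controlling one side of the Hausdorff distance explicitly. For the other side I use $\sigma(H)\subseteq\{\epsilon_{L(n)}=0\}$ and define the \emph{computable} quantity
\[
E_n:=\max\Big(2^{-n},\ \sup_{\lambda:\,\epsilon_{L(n)}(\lambda)=0}\operatorname{dist}\big(\lambda,\Gamma_n\big)\Big),
\]
which dominates $d(\Gamma_n,\sigma(H))$ because the spectrum lies in the uncertified set. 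That $E_n\to0$ follows from the two convergences: the uncertified set $\{\epsilon_{L}=0\}$ shrinks to $\sigma(H)$ as gaps get certified, while on a neighbourhood of $\sigma(H)$ the estimate $u_L$ is small, by monotonicity of $u_L$ in $L$ and a Dini argument on the compact range upgrading the pointwise convergence $u_L\to0$ on $\sigma(H)$ to a uniform one; choosing $L(n)$ by searching until the grid-evaluated $E_n$ falls below $2^{-n+1}$ then closes the argument. I expect the main obstacle to be exactly this lower-bound convergence: turning the qualitative resolvent decay in a gap into a quantitative, patch-uniform Combes--Thomas bound that forces $D(x)<N^{-d/2}$ simultaneously for all local environments, and tracking it carefully enough to certify that $E_n\to0$ at a controlled rate.
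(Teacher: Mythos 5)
Your proposal follows essentially the same route as the paper: it combines a Colbrook-type upper bound (smallest singular value of a rectangular truncation, reduced to finitely many representative environments via finite local complexity) with the gap-certification lower bound built from Proposition~\ref{proposition-recall} and Theorem~\ref{main-theorem-recall}, whose convergence rests on a Combes--Thomas estimate. Be aware that the paper itself only sketches this argument (Theorems~\ref{theorem-colbrook-bound-sci}--\ref{theorem-always-bulk-gapped}) and defers the details to a future publication --- in particular the patch-uniform quantitative Combes--Thomas bound (Theorem~\ref{theorem-always-bulk-gapped}) that you correctly single out as the main remaining obstacle.
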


A detailed proof of this theorem will be given in a future publication. Basically, one can combine the upper bound on the distance to the spectrum from \cite[Supplementary Information, Theorem 3]{colbrook_2019} with the computable lower bound we provide here. The pertinent theorem on the upper bound from \cite{colbrook_2019} can be forumulated as follows:

\newcommand{\up}{^{\mathrm{up}}}
\newcommand{\low}{^{\mathrm{low}}}
\begin{theorem}
\label{theorem-colbrook-bound-sci}
Let $A \in \Omega_2$ and let $\Gamma_n\up(A), E\up(n)$ be computed using the algorithm in \cite{colbrook_2019} with size parameter $n$. Then $\Gamma_n\up(A) \to \sigma(A)$ and $E\up(n) \to 0$ as $n \to \infty$ and $\Gamma_n\up(n)$ is contained in the $E\up(n)$ neighbourhood of $\sigma(A)$. Moreover, $\Gamma_n$ can be implemented using finitely many arithmetic operations and comparisons on the matrix elements of $A$.
\end{theorem}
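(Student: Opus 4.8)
The plan is to realise the distance-to-spectrum as an injection modulus and to certify it rigorously from below by finite truncations. Since every $A\in\Omega_2$ is bounded and self-adjoint, its spectrum lies in the compact real window $[-N_{\max},N_{\max}]$ and for each real $z$ one has the variational identity $\operatorname{dist}(z,\sigma(A))=\inf_{\norm{\psi}=1}\norm{(A-z)\psi}$. Fixing a truncation size, let $P_n$ be the coordinate projection onto the first $n$ enumerated points $\gamma_1,\dots,\gamma_n$, and define the finite-section injection modulus $\beta_n(z):=\min_{\psi\in\operatorname{ran}P_n,\,\norm{\psi}=1}\norm{(A-z)\psi}$. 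Because $A$ has maximum hopping length $m$, the vector $(A-z)\psi$ is supported, for $\psi\in\operatorname{ran}P_n$, within geometric distance $m$ of the first $n$ sites; using the evaluation functions $f^{\mathrm{pos}}_{k,l}$ and $f^{\mathrm{hop}}$ one locates a finite enlargement $P_{n'}$ containing this support, so $\beta_n(z)$ is exactly the smallest singular value of the finite rectangular matrix $B_n(z)=P_{n'}(A-z)P_n$, i.e.\ $\beta_n(z)=\sqrt{s_n(z)}$ with $s_n(z)$ the smallest eigenvalue of $B_n(z)^*B_n(z)$ (the same quantity computed in the Fibonacci section). This is assembled from finitely many matrix entries $f^{\mathrm{mat}}_{k_1,k_2}$ and is computable to any prescribed accuracy by finitely many arithmetic operations and comparisons, in line with Definition~\ref{def-sci}.

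The decisive inequality is $\beta_n(z)\ge\operatorname{dist}(z,\sigma(A))$ for all $n$ and all $z$, which holds simply because restricting the infimum to the smaller set $\operatorname{ran}P_n$ can only raise it. This yields rigorous, a posteriori one-sided error control: whenever $\beta_n(z)$ is small, the point $z$ is \emph{provably} within $\beta_n(z)$ of the spectrum, with no hidden constants. Conversely, since the finitely supported vectors are dense and $A$ is bounded, any Weyl quasimode can be approximated inside $\operatorname{ran}P_n$ for $n$ large, so $\beta_n(z)\downarrow\operatorname{dist}(z,\sigma(A))$ monotonically. Each $\beta_n$ is continuous in $z$ and the limit $\operatorname{dist}(\cdot,\sigma(A))$ is $1$-Lipschitz, hence continuous; as the spectral window is compact, Dini's theorem upgrades this monotone pointwise convergence to uniform convergence on $[-N_{\max},N_{\max}]$.

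I would then build the output by gridding the window with spacing $h_n\to0$, evaluating $\beta_n$ at the nodes, and setting $\Gamma_n\up(A):=\{\,z\text{ a grid node}\mid\beta_n(z)\le t_n\,\}$ together with $E\up(n):=t_n+h_n$ for a threshold $t_n\to0$. The containment $\Gamma_n\up(A)\subseteq\{z:\operatorname{dist}(z,\sigma(A))\le E\up(n)\}$ is immediate from the decisive inequality, while the reverse inclusion needed for Hausdorff convergence follows from uniform convergence: once $\sup_z(\beta_n(z)-\operatorname{dist}(z,\sigma(A)))$ falls below $t_n$, every node within $h_n$ of $\sigma(A)$ is retained, so $\Gamma_n\up(A)$ covers $\sigma(A)$ up to $E\up(n)$, giving $\Gamma_n\up(A)\to\sigma(A)$.

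The main obstacle is effectivity: Dini's theorem supplies a uniform but not a priori computable rate, so the schedules $t_n,h_n$ cannot be fixed in advance and still be guaranteed to track the true convergence. I expect to resolve this exactly as the SCI framework permits, by choosing the number of evaluations adaptively from the previous ones, enlarging the truncation until the measured $\beta_n$ on the current grid drop below the current threshold, and driving $t_n,h_n\to0$ along this adaptive schedule. Correctness of the error bound never depends on this rate, since $E\up(n)$ is validated a posteriori through $\beta_n\ge\operatorname{dist}(\cdot,\sigma(A))$; the rate only governs how fast $E\up(n)\to0$. This is precisely why the construction delivers a one-sided upper bound and cannot, by itself, certify the \emph{absence} of spectrum: the complementary computable lower bound of this paper, available under finite local complexity, is what closes the gap and lowers the solvability complexity index to one, as asserted in Theorem~\ref{theorem-fc-complexity}.
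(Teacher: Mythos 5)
The paper does not prove this statement itself---it imports it verbatim from \cite{colbrook_2019}---but your argument is a correct reconstruction of the standard proof behind that result: your finite-section injection modulus $\beta_n(z)$, the smallest singular value of the rectangular truncation $P_{n'}(A-z)P_n$, is exactly the quantity $\sqrt{s}$ that the paper itself evaluates in its Fibonacci section, and the combination of the one-sided inequality $\beta_n(z)\geq \mathrm{dist}(z,\sigma(A))$ with monotone (Dini) convergence on the compact window is the same mechanism that yields a one-sided, a posteriori error bound together with Hausdorff convergence. The only bookkeeping to tighten is that a grid node within $h_n$ of $\sigma(A)$ satisfies $\beta_n(z)< t_n+h_n$ rather than $\beta_n(z)\leq t_n$, so the retention threshold should be $t_n+h_n$ (harmlessly absorbed into $E^{\mathrm{up}}(n)$), and the smallest singular value can only be bracketed, not computed exactly, by finitely many arithmetic operations and comparisons, so its approximation error must likewise be folded into $E^{\mathrm{up}}(n)$; neither point affects the correctness of your approach.
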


Using the algorithm in this article, we can also give the following statement on a computable lower bound for operators of finite local complexity:

\begin{theorem}
\label{theorem-our-bound-sci}
For the computational problem $(\Omega\fc, \Lambda\fc, (\mathcal M\fc, d), \Xi\fc)$, there exists an algorithm computing a spectral approximation $\Gamma\low_n(H)$ and an error estimate $E\low_n$ such that $\sigma(H)$ is contained in an $E\low_n$-neighbourhood of $\Gamma\low_n$.
\end{theorem}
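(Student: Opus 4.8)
The plan is to produce $\Gamma\low_n$ as an explicit \emph{over-approximation} of the spectrum. Fixing a scale $L=L(n)\to\infty$ and a lattice parameter $N$, I let $G_n$ be the union of all open intervals $(\lambda-\epsilon,\lambda+\epsilon)$ that the criterion of Proposition~\ref{proposition-recall} certifies to be spectral gaps, as $\lambda$ ranges over a grid of spacing $2^{-n}$ in the compact window $[-\mathcal N,\mathcal N]$ with $\mathcal N:=f^{\mathrm{norm}}(\Gamma,H)\geq\norm{H}_{\mathrm{op}}$, and I set $\Gamma\low_n:=[-\mathcal N,\mathcal N]\setminus G_n$. By Theorem~\ref{main-theorem-recall} every certified interval is a genuine gap, so $\sigma(H)\subseteq\Gamma\low_n$ and the required containment of $\sigma(H)$ in an $E\low_n$-neighbourhood of $\Gamma\low_n$ holds for \emph{any} $E\low_n\geq0$; the real content is therefore the \emph{convergence} $\Gamma\low_n\to\sigma(H)$ together with the fact that $\Gamma\low_n$ and a vanishing error $E\low_n$ are computable in the sense of Definition~\ref{def-sci}.

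First I would make the construction effective and finite. For the scale $L$ the evaluation function $f^{\mathrm{flc}}_{L}$ returns a radius $R$ inside which every local patch of size $L$ is realised; by Definition~\ref{definition-finite-local-complexity} the finitely many matrices $H_{B_L(x)}$ with $x\in\Gamma\cap B_R(0)$, read off through $f^{\mathrm{pos}}$, $f^{\mathrm{mat}}$ and the hopping length $m=f^{\mathrm{hop}}$, represent \emph{all} patches up to translation and gauge. Consequently the quantifier ``for all $x\in\Gamma$'' in Proposition~\ref{proposition-recall} collapses to a minimum over this finite list, and for the choice $A(x)=B_{L-m}(x)$ the quantities $D(x)$ and $M(x):=\norm{\mathbf 1_{\overline{B_l}(x)}(H_{A(x)}-\lambda)^{-1}\mathbf 1_{A(x)}}_{\mathrm{op}}$ are resolvents and operator norms of fixed finite matrices. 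Each is an algebraic functional of the matrix entries that can be approximated to any prescribed tolerance, with a rigorous error bound, by finitely many arithmetic operations and comparisons (e.g.\ bisection on Sturm sequences for the smallest singular value of $H_{A(x)}-\lambda$ and for the norm of the product), which is exactly what the BSS model of Definition~\ref{def-sci} permits. Whenever $\max_x D(x)<N^{-d/2}$, \eqref{eq-conditions-epsilon-recall} yields a certified radius $\epsilon(\lambda)>0$ and the interval is added to $G_n$.

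The heart of the proof is the convergence $E\low_n\to0$, where $E\low_n:=\sup_{\mu\in\Gamma\low_n}\mathrm{dist}(\mu,\sigma(H))$ measures the overshoot. Fix $\lambda_0\notin\sigma(H)$ with $\mathrm{dist}(\lambda_0,\sigma(H))=:\eta>0$. The finite hopping length together with this spectral gap give, by a Combes--Thomas estimate, exponential off-diagonal decay of $(H-\lambda_0)^{-1}$ at a rate depending only on $\eta$ and $m$; comparing the finite-volume resolvent $(H_{A(x)}-\lambda_0)^{-1}$ with the restriction of the infinite-volume one shows that for $L$ large $\lambda_0\notin\sigma(H_{A(x)})$, that $M(x)\le c\,\eta^{-1}$, and that the boundary term $D(x)$ decays exponentially in $L-l$. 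Crucially, finite local complexity upgrades these bounds to estimates \emph{uniform} in $x$, since only finitely many patches occur; hence there is an $L_0(\eta)$ beyond which $\max_x D(x)<N^{-d/2}$ and the certified radius at the grid point nearest $\lambda_0$ stays bounded below, so $\lambda_0\notin\Gamma\low_n$ for all large $n$. As $\lambda_0\notin\sigma(H)$ was arbitrary and everything lives in the compact set $[-\mathcal N,\mathcal N]$, a routine compactness argument promotes this pointwise exclusion to $E\low_n\to0$. I expect the uniform Combes--Thomas step — controlling all finitely many finite-volume resolvents simultaneously and quantitatively in terms of $\eta$ and $m$ — to be the main obstacle; the remaining task of turning $E\low_n$ into an explicitly computed, vanishing error is then settled by pairing $\Gamma\low_n$ with the computable upper bound on the distance to the spectrum of Theorem~\ref{theorem-colbrook-bound-sci}, which is precisely the combination underlying Theorem~\ref{theorem-fc-complexity}.
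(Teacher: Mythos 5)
First, a caveat on the comparison itself: the paper does not prove this theorem in the Supplemental Material --- it reduces it to Theorem~\ref{theorem-always-bulk-gapped} (every energy at distance $>\delta$ from $\sigma(H)$ is $\epsilon$-bulk-gapped at a computable scale $L$), whose proof via a discrete Combes--Thomas estimate is explicitly deferred to a future publication. Your overall strategy --- an energy grid, reduction to finitely many patches via $f^{\mathrm{flc}}$, Combes--Thomas for convergence, and the pairing with Theorem~\ref{theorem-colbrook-bound-sci} for error control --- matches that sketch, and your observation that the literal containment is trivial while the real content is $E^{\mathrm{low}}_n\to 0$ is correct. The genuine gap is in \emph{which} object you ask Combes--Thomas to control. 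You route completeness through Proposition~\ref{proposition-recall} with the fixed choice $A(x)=B_{L-m}(x)$ and assert that $\mathrm{dist}(\lambda_0,\sigma(H))=\eta>0$ forces, for large $L$, both $\lambda_0\notin\sigma(H_{A(x)})$ and $M(x)\le c\,\eta^{-1}$. This step fails: finite truncations $H_{A}$ of a gapped operator routinely acquire spectrum inside the gap of $H$ (boundary modes / spectral pollution; a truncated SSH chain is the standard counterexample), so $\lambda_0$ can lie in, or arbitrarily close to, $\sigma(H_{A(x)})$ no matter how large $L$ is, in which case $(H_{A(x)}-\lambda_0)^{-1}$ is undefined or has enormous norm. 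The edge-localization of the offending eigenvectors does not obviously rescue $M(x)=\norm{\mathbf{1}_{\overline{B_l}(x)}(H_{A(x)}-\lambda_0)^{-1}\mathbf{1}_{A(x)}}_{\mathrm{op}}$, because the projector on the right includes the edge region. The paper is aware of precisely this failure mode --- it is the stated reason Algorithm 4 retries with perturbed sets $A(x)$ when $\lambda$ is close to $\sigma(H_{B_{L-m}(x)})$.

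The paper's Theorem~\ref{theorem-always-bulk-gapped} sidesteps this by asserting the bulk-gapped property of Definition~\ref{sup:def-eps-bulk-gapped} directly: that condition involves only quasimodes of the \emph{full} operator $H$ restricted to boxes, so the Combes--Thomas estimate is applied to $(H-\lambda_0)^{-1}$, whose norm genuinely is $\eta^{-1}$ and whose off-diagonal decay is uniform over the finitely many patches; and for finite-range $H$ with finite-dimensional $\mathcal H$ the bulk-gapped condition on a single box is a quantified polynomial condition on a finite matrix, hence checkable in the BSS model without ever inverting a truncated Hamiltonian. To repair your version you would need either to prove that for every patch \emph{some} admissible $A(x)\subseteq B_{L-m}(x)$ keeps $\lambda_0$ quantitatively away from $\sigma(H_{A(x)})$ --- plausible, but not supplied by Combes--Thomas alone --- or to abandon the finite-volume resolvent criterion for the theoretical argument and verify Definition~\ref{sup:def-eps-bulk-gapped} directly, which is the route the paper indicates.
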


From the combination of Theorems \ref{theorem-colbrook-bound-sci} and \ref{theorem-our-bound-sci}, we can clearly get an approximation of the spectrum that converges in Hausdorff distance with error control. (To get this set, simply compute the intersection of $\Gamma_n\low$ with the $E_n\up$ neighbourhood of $\Gamma_n\up$ and set $E_n = \max(E_n\up, E_n\low)$.) Hence the SCI of the problem $(\Omega\fc, \Lambda\fc, (\mathcal M\fc, d), \Xi\fc)$ is 1.

The proof of Theorem \ref{theorem-our-bound-sci} is based on the following results. 

\begin{theorem}
\label{theorem-always-bulk-gapped}
Let $H$ be an operator on a uniformly discrete set $\Gamma \subseteq \mathbb{R}^d$ with finite hopping length $m \in \mathbb{R}$. Then for every $\delta > 0$ there exists an $\epsilon > 0$ and an $L \in \mathbb{N}$, such that $H$ is $\epsilon$-bulk-gapped at scale $L$ for all energies $\lambda$ where the distance to the bulk spectrum $\rm{dist}(\lambda, \sigma(H)) > \delta$.
\end{theorem}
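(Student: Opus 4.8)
The plan is to verify the defining condition of Definition~\ref{sup:def-eps-bulk-gapped} directly, for the simplest admissible choice $N=2$, using the exponential decay of the resolvent of the finite-range operator $H$ at energies bounded away from its spectrum. The one analytic input I would invoke is a Combes--Thomas estimate: since $H$ is self-adjoint with hopping length $m$ and $\mathrm{dist}(\lambda,\sigma(H))>\delta$, there are constants $\mu=\mu(\delta,m,\norm{H}_{\mathrm{op}})>0$ and $C_0=C_0(\delta,m,\norm{H}_{\mathrm{op}})$, \emph{independent of the particular $\lambda$ and of the centre $x$}, such that for all regions $A,B\subset\mathbb{R}^d$
\[
\norm{\mathbf{1}_A (H-\lambda)^{-1}\mathbf{1}_B}_{\mathrm{op}}\;\le\;\tfrac{C_0}{\delta}\,e^{-\mu\,\mathrm{dist}(A,B)}.
\]
The uniformity of $\mu$ and $C_0$ over the whole region $\{\lambda:\mathrm{dist}(\lambda,\sigma(H))>\delta\}$ is exactly what will let a single pair $(\epsilon,L)$ serve every such $\lambda$ at once.

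Next I would fix $N=2$ and set $l=(L+r)/N+r$, with $r$ the covering radius of $\Gamma$; for $L$ large enough one has $\overline{B_l}(x)\subset B_{L-m}(x)$ and the separation $L-l-2m=\tfrac{L}{2}-\tfrac{3r}{2}-2m$ grows linearly in $L$. Let $\psi$ be an $L$-local $\epsilon$-quasimode at $x$, so that \eqref{eq-prop} holds, and introduce the sharp cutoff $g:=\mathbf{1}_{B_{L-m}(x)}$. Expanding $(H-\lambda)(g\psi)=g\,(H-\lambda)\psi+[H,g]\psi$, applying $(H-\lambda)^{-1}$, restricting to $\overline{B_l}(x)$, and using $g\equiv1$ there gives
\[
\norm{\psi}_{\overline{B_l}(x)}\le \norm{\mathbf{1}_{\overline{B_l}(x)}(H-\lambda)^{-1}g\,(H-\lambda)\psi}+\norm{\mathbf{1}_{\overline{B_l}(x)}(H-\lambda)^{-1}[H,g]\psi}.
\]
The first term needs no decay: $g\,(H-\lambda)\psi$ is supported in $B_L(x)$, so by \eqref{eq-prop} and $\norm{(H-\lambda)^{-1}}_{\mathrm{op}}\le\delta^{-1}$ it is at most $\tfrac{\epsilon}{\delta}\norm{\psi}_{B_L(x)}$. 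For the second term, finite range confines $[H,g]\psi$ to the width-$m$ collar around $\partial B_{L-m}(x)$, which lies inside $B_L(x)$ and at distance $\ge L-l-2m$ from $\overline{B_l}(x)$; there $\norm{[H,g]\psi}\le 2\norm{H}_{\mathrm{op}}\norm{\psi}_{B_L(x)}$, so Combes--Thomas bounds it by $\tfrac{2\norm{H}_{\mathrm{op}}C_0}{\delta}e^{-\mu(L-l-2m)}\norm{\psi}_{B_L(x)}$. Hence
\[
\norm{\psi}_{\overline{B_l}(x)}\le\Big(\tfrac{\epsilon}{\delta}+\tfrac{2\norm{H}_{\mathrm{op}}C_0}{\delta}e^{-\mu(L-l-2m)}\Big)\norm{\psi}_{B_L(x)}.
\]

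To close the argument I would first choose $L$ so large (depending only on $\delta,m,\norm{H}_{\mathrm{op}}$) that the exponential term is below $\tfrac12 N^{-d/2}=\tfrac12 2^{-d/2}$, and then $\epsilon<\tfrac{\delta}{2}2^{-d/2}$ so that the first term is below $\tfrac12 2^{-d/2}$ as well. This yields $\norm{\psi}_{\overline{B_l}(x)}<2^{-d/2}\norm{\psi}_{B_L(x)}$, i.e.\ $\norm{\psi}_{\overline{B_l}(x)}^2<N^{-d}\norm{\psi}_{B_L(x)}^2$, which is \eqref{sup:eq-prop-conclusio} for any $C$ squeezed strictly between the squared bracket and $N^{-d}$. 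Since $\mu$, $C_0$, and the two thresholds depend only on $\delta,m,\norm{H}_{\mathrm{op}}$ and not on the chosen $\lambda$ or $x$, the same tuple $(\epsilon,L,N,C)$ witnesses $\epsilon$-bulk-gappedness for every energy with $\mathrm{dist}(\lambda,\sigma(H))>\delta$, which is the assertion.

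I expect the Combes--Thomas step to be the only real obstacle: one must establish the resolvent decay with a rate $\mu$ and prefactor $C_0$ that are genuinely uniform over the entire energy set $\mathrm{dist}(\lambda,\sigma(H))>\delta$ (in particular not degenerating as $\lambda$ approaches the outer edges of that set), and one must track how $\mu$ deteriorates as $\delta\to0$, since that dependence dictates how large $L$ must be taken. By contrast, the geometric bookkeeping --- the cutoff $g$, the width-$m$ collar carrying $[H,g]\psi$, and the linearly growing separation $L-l-2m$ for the fixed choice $N=2$ --- is routine.
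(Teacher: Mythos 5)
The paper does not actually contain a proof of this theorem --- it defers the argument to a future publication, saying only that it relies on a discrete Combes--Thomas estimate --- and your proposal supplies precisely that argument, correctly: a sharp cutoff at $B_{L-m}(x)$, the commutator identity, and resolvent decay uniform over $\{\lambda:\mathrm{dist}(\lambda,\sigma(H))>\delta\}$ to verify Definition~\ref{sup:def-eps-bulk-gapped} with $N=2$. The one point to add is that the Combes--Thomas constants $\mu$ and $C_0$ also depend on the uniform discreteness constant of $\Gamma$ (needed, e.g.\ via Schur's test, to control $\sum_z|H_{yz}|$ when conjugating by $e^{\alpha\cdot X}$), which is consistent with the paper's own remark that $(\epsilon,L)$ depends on $\Gamma$ in addition to $\norm{H}$ and $m$.
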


The tuple $(\epsilon, L)$ can be computed explicitly and  depends only on $\norm{H}$, on $m$ and on $\Gamma$. The proof of Theorem \ref{theorem-always-bulk-gapped}, which will be presented in a future publication, relies on a discrete version of the well-known Combes-Thomas estimates \cite{combes_thomas}.

\bibliography{quasicrystals}
\end{document}